\documentclass [6pt,a4paper]{article}
\usepackage{bbm}
\usepackage{amsfonts}
\usepackage{mathrsfs}
\usepackage {amssymb}
\usepackage {amsmath}
\usepackage {ntheorem}
\usepackage{latexsym}
\usepackage{booktabs}
\usepackage{mathrsfs}
\usepackage{bm}
\newenvironment{proof}{{\noindent\it\textbf{Proof}}\quad}{\hfill $\square$\par}



 \oddsidemargin -0.54cm \headheight=16pt
\headsep=20pt
\footskip=20pt \topmargin 0.79cm \evensidemargin 0pt \textwidth=14.5
true cm \textheight=20.6 true cm
\usepackage{lastpage}
\usepackage{epsfig}



\begin{document}\large
\newtheorem{lemma}{Lemma}[section]
\newtheorem{theorem}[lemma]{Theorem}
\newtheorem{example}[lemma]{Example}
\newtheorem{definition}[lemma]{Definition}
\newtheorem{proposition}[lemma]{Proposition}
\newtheorem{conjecture}[lemma]{Conjecture}
\newtheorem{corollary}[lemma]{Corollary}
\newtheorem{remark}{Remark}
\begin{center}
\textbf{\LARGE{New entanglement-assisted quantum MDS codes with length $n=\frac{q^2+1}5$}}\footnote { E-mail
addresses: zhushixinmath@hfut.edu.cn(S.Zhu),jiangw000@163.com(W.Jiang),chenxiaojing0909@ahu.\\edu.cn (X.Chen).This research is supported by the National Natural Science
 Foundation of China ( No.61772168).
}\\
\end{center}

\begin{center}
{ { Shixin Zhu$^1$, \  Wan Jiang$^1$, \  Xiaojing Chen$^2$} }
\end{center}

\begin{center}
\textit{1\ School of Mathematics, Hefei University of
Technology, Hefei 230009, Anhui, P.R.China\\
2\  School of Internet, Anhui University, Hefei 230039, Anhui, P.R.China}
\end{center}

\noindent\textbf{Abstract:} The entanglement-assisted stabilizer formalism can transform arbitrary classical linear codes into entanglement-assisted quantum error correcting codes (EAQECCs). In this work, we  construct some new entanglement-assisted quantum MDS (EAQMDS) codes with length  $n=\frac{q^2+1}5$ from cyclic codes. Compared with all the previously known parameters with the same length, all of them have flexible parameters and larger minimum distance. \\
\noindent\textbf{Keywords}:\ EAQECCs $\cdot$ Cyclic codes $\cdot$ EAQMDS codes \        

\section{Introduction}

In recent years, quantum computation has become a hot research issue. As in the case of classical digital communications, quantum error-correcting codes (QECCs) play an important role in quantum information processing and quantum computation \cite{ref1}-\cite{ref2}. There are many good works about QECCs \cite{ref3}-\cite{ref27}.

In kinds of construction methods of QECCs, the CSS construction is most frequently used. It establishes a relationship between classical error correcting codes and QECCs, but it needs classical codes to be dual-containing or self-orthogonal which is not easy to satisfy all the time. This problem is solved after Brun et al. proposed EAQECCs which allows non-dual-containing classical codes to construct QECCs if the sender and receiver shared entanglement bits in advance \cite{ref4}. This has inspired more and more scholars to focus on constructing EAQECCs \cite{ref16}-\cite{ref17}.

As we all know, many EAQMDS codes with a small number of entangled states have been constructed. In \cite{ref21}, Lu et al. constructed new EAQMDS codes of length $n = q^2+1$ with larger minimum distance and consumed four entanglement bits. Besides, Chen et al. constructed EAQMDS codes by using constacyclic codes with length $n=\frac{q^2+1}{5}$ and consumed four entanglement bits in \cite{ref28}. Actually, the larger the minimum distance of EAQECCs are, the more the entanglement bits will be employed. However, it is not an easy task to analyze the accurate parameters if the value of $c$ is too large or flexible. Recently, some scholars have obtained great progress. In \cite{ref23}, Qian and Zhang constructed some new EAQMDS codes with length $n = q^2+1$ and some new entanglement-assisted quantum almost MDS codes with length $n = q^4-1$. Besides, Wang et al. obtained series of EAQECCs with flexible parameters of length $n = q^2+1$ by using $q^2$-cyclotomic coset modulo $rn$ \cite{ref24}. And almost all of those known results about EAQECCs with the same length are some special cases of their. 

Inspired by the above work, we consider to use cyclic codes to construct EAQECCs of length $n=\frac{q^2+1}{5}$, with flexible parameters naturally. Based on cyclic codes, we construct four classes of EAQMDS codes with the following parameters:

(1) $[[n, n-4(m-1)(5m-q-5)-1, 2(m-1)q+2; 20(m-1)^2+1]]_q,$ where $ q=10k+3 $ $(k\geq 2) $  is an odd prime power and  $2\leq m\leq \frac{q-3}{10}$.

(2) $[[n, n-4(m-1)(5m-q-5)-1, 2(m-1)q+2; 20(m-1)^2+1]]_q,$ where $ q=10k+7 $ $(k\geq 2) $  is an odd prime power and  $2\leq m\leq \frac{q-7}{10}$.

(3) $[[n, n-4(m-1)(5m-q-5)-1, 2(m-1)q+2; 20(m-1)^2+1]]_q$, where $ q=2^e(e>1)  $,  $e\equiv 1~ {\rm mod~ }4 $ is a positive integer and $2\leq m\leq \frac{q-2}{10}$.

(4) $[[n, n-4(m-1)(5m-q-5)-1, 2(m-1)q+2; 20(m-1)^2+1]]_q$, where $ q=2^e(e>1)  $, $e\equiv 3~ {\rm mod~ }4 $ is a positive integer and $2\leq m\leq \frac{q-8}{10}$.

The main organization of this paper is as follows. In Sect.2, some basic background and results about cyclic codes and EAQECCs are reviewed. In Sect.3, we construct four classes of optimal EAQMDS codes with length $n=\frac{q^2+1}{5}$. Sect.4 concludes the paper.

\section{Preliminaries} 
  In this section, we will review some relevant concepts on cyclic codes and EAQECCs. For further and detailed information on cyclic codes can be found in \cite{ref25,ref29}, EAQECCs please see \cite{ref5, ref4, ref11, ref23, ref24, ref31}.
\subsection{Review of Cyclic Codes}
  For given a positive integer $l$ and prime number $p$, let $q=p^l$ and $\mathbb{F}_{q^2}$ be the finite field of $q^2$ elements. A $k$-dimensional subspace of the $n$-dimensional vector space $\mathbb{F}_{q^2}^n$ is a linear code of length $n$ over $\mathbb{F}_{q^2}$ and this linear code is denoted by $[n,k]_{q^2}$. If an $[n,k]_{q^2}$ linear code $\mathcal{C}$ can detect $d-1$ errors but not $d$ errors, it is an $[n,k,d]_{q^2}$ linear code $\mathcal{C}$. For any $\alpha \in \mathbb{F}_{q^2}$, the conjugation of $\alpha$ is denoted by $\overline\alpha=\alpha^q$ and the conjugation transpose of an $m\times n$ matrix $H=(x_{i,j})$ entries in $\mathbb{F}_{q^2}$ is an $n \times m$ matrix $H^\dag=({x_{j,i}^q})$.
  
  The Hermitian inner product $\langle\mu,\upsilon\rangle_h$ of the vectors $\mu=(u_0,u_1,\cdots,u_{n-1})$ and $\upsilon=(v_0,v_1,\cdots,v_{n-1})$ in $\mathbb{F}_{q^2}^n$ is $$\langle\mu,\upsilon\rangle_h=\sum_{i=0}^{n-1}\overline{u_i}v_i=u_0^qv_0+u_1^qv_1+\cdots+u_{n-1}^qv_{n-1}.$$
  The Hermitian dual code of $\mathcal{C}$ is defined as
  $$\mathcal{C}^{\perp_h}=\{\mu\in\mathbb{F}_{q^2}^n~|\langle\mu,\upsilon\rangle_h=0~{\rm for~ all}~\upsilon\in \mathcal{C}\}.$$
 
  If $\mathcal{C}\subseteq \mathcal{C}^{\perp_h}$, then $\mathcal{C}$ is called a Hermitian self-orthogonal code.
  If $\mathcal{C}^{\perp_h}\subseteq\mathcal{C}$, then $\mathcal{C}$ is called a Hermitian dual-containing code.
  
  A linear code $\mathcal{C}$ of length $n$ over $\mathbb{F}_{q^2}^n$ is said to be cyclic if for any codeword  $(c_0,c_1,\ldots,c_{n-1})\in \mathcal{C}$ implies its cyclic shift
  $(c_{n-1},c_0,\dots,c_{n-2})\in \mathcal{C}$. For a cyclic code $\mathcal{C}$, each codeword $c=(c_0,c_1,\cdots,c_{n-1})$ is customarily represented in its polynomial form: $c(x)=c_0+c_1x+\dots+c_{n-1}x^{n-1}$, and the $\mathcal{C}$ is in turn identified with the set of all polynomial representations of its codewords. Then, a $q^2$-ary cyclic code $\mathcal{C}$ of length $n$ is an ideal of $\mathbb{F}_{q^2}[x]/\langle x^n-1 \rangle $ and $\mathcal{C}$ can be generated by a monic polynomial factors of $x^n-1$, i.e. $C=\langle g(x) \rangle$ and $g(x)|(x^n-1)$. The $g(x)$ is called the generator polynomial of $\mathcal{C}$. 
  
  Note that $x^n-1$ has no repeated root over $\mathbb{F}_{q^2}$ if and only if gcd$(n,q)=1$. Let $ \lambda $ denote a primitive $n$-th root of unity in some extension field of $\mathbb{F}_{q^2}$. Hence, $ x^n-1=\prod_{i=0}^{n-1}(x-\lambda^i)$. For $ 0\leq i \leq n-1$, the $q^2$-cyclotomic coset modulo $n$ containing $i$ is defined by the set $$C_i=\{i, iq^2,iq^4,...,iq^{2(m_i-1)} \},$$ where $m_i$ is the smallest positive integer such that $iq^{2m_i}\equiv i~mod~n$. Each $C_i$ corresponds to an irreducible divisor of $x^n-1$ over $\mathbb{F}_{q^2}$. The defining set of a cyclic code $\mathcal{C}=\langle g(x)\rangle$ of length $n$ is the set $Z=\{0\leq i\leq n-1~|~g(\lambda^i)=0\}$.
  
  Let $\mathcal{C}$ be an $[n,k,d]$ cyclic code over $\mathbb{F}_{q^2}$ with defining set $Z$. Obviously $Z$ must be a union of some $q^2$-cyclotomic coset modulo $n$ and $dim(\mathcal{C})=n-|Z|$.  It is clear to see that the Hermitian dual code $\mathcal{C}^{\perp_h}$ has a defining set $Z^{\perp_h}=\{0\leq z \leq n-1|-qz~mod~n \notin Z\}$. Note that $Z^{-q}=\{-qz~mod~n|~z\in Z\}$. Then, $\mathcal{C}$ contains its Hermitian dual
  code if and only if $Z\bigcap Z^{-q}=\emptyset$ from Lemma 2.2 in \cite{ref25}.
  The following is a lower bound for cyclic codes:
 
 \begin{proposition}\label{pro:2.1} \rm\cite{ref29}( The BCH Bound for Cyclic Codes)\emph{ Let $\mathcal{C}$ be a $q^2$-ary cyclic code of length $n$ with defining set $Z$. If $Z$ contains $d-1$ consecutive elements, then the minimum distance of $\mathcal{C}$ is at least $d$.}
\end{proposition}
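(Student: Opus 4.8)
The plan is to run the classical Vandermonde argument for the BCH bound. Suppose the defining set $Z$ contains $d-1$ consecutive residues, say $b, b+1, \ldots, b+d-2$ modulo $n$, and let $c(x)=\sum_{i=0}^{n-1}c_i x^i$ be an arbitrary nonzero codeword of $\mathcal{C}$; the goal is to show that the Hamming weight of $c$ is at least $d$. I would argue by contradiction: assume $c$ has weight $w\le d-1$, and let $i_1<i_2<\cdots<i_w$ be the positions in $\{0,1,\ldots,n-1\}$ where its coefficients are nonzero.

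Next I would use the fact that a polynomial lies in $\mathcal{C}=\langle g(x)\rangle$ if and only if it vanishes at $\lambda^j$ for every $j\in Z$, where $\lambda$ is the fixed primitive $n$-th root of unity. Evaluating $c(\lambda^j)=0$ at the $w$ exponents $j=b,b+1,\ldots,b+w-1$ (legitimate since $w-1\le d-2$) yields a homogeneous linear system in the unknowns $c_{i_1},\ldots,c_{i_w}$ whose coefficient matrix is $M=\bigl(\lambda^{i_\ell(b+s)}\bigr)_{0\le s\le w-1,\ 1\le \ell\le w}$. Pulling the factor $\lambda^{i_\ell b}$ out of the $\ell$-th column gives $\det M=\bigl(\prod_{\ell=1}^{w}\lambda^{i_\ell b}\bigr)\det\bigl(\lambda^{i_\ell s}\bigr)_{0\le s\le w-1,\ 1\le\ell\le w}$, and the remaining determinant is a Vandermonde determinant in the nodes $\lambda^{i_1},\ldots,\lambda^{i_w}$. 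Since $\lambda$ has order exactly $n$ and $0\le i_1<\cdots<i_w\le n-1$, these nodes are pairwise distinct, so the Vandermonde determinant, and hence $\det M$, is nonzero.

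Consequently the linear system has only the trivial solution, forcing $c_{i_1}=\cdots=c_{i_w}=0$, contradicting the choice of $i_1,\ldots,i_w$ as the support of $c$. Hence every nonzero codeword of $\mathcal{C}$ has weight at least $d$, i.e. $d(\mathcal{C})\ge d$. The argument is essentially routine; the only point needing a little care is the index bookkeeping modulo $n$ — one should read the consecutive block in $Z$ as a set of residue classes and observe that $\lambda^{i_\ell(b+s)}$ is well defined regardless of the chosen representatives because $\lambda^{n}=1$ — but no genuine obstacle arises.
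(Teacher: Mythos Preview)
Your proof is correct and is precisely the classical Vandermonde argument for the BCH bound. Note, however, that the paper does not supply its own proof of this proposition: it is stated as a known result with a citation to MacWilliams--Sloane \cite{ref29}, so there is nothing to compare against beyond observing that your argument is the standard textbook proof found in that reference.
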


\subsection{Review of EAQECCs }
 
 Let $c$ be a nonnegative integer. Via $c$ pairs of maximally entanglement states, an  $[[n,k, d; c]]_q$ EAQECC encodes $k$ information qubits into $n$ qubits, and it can correct up to $\lfloor \frac{d-1}{2}\rfloor$ errors which act on $n$ qubits, where $d$ is called minimum distance of the EAQECC.
 
  A parity check $H$ of an $[n,k]_{q^2}$ linear code $\mathcal{C}$ with respect to Hermitian inner product is an $(n-k)\times n$ matrix whose rows constitute a basis for $\mathcal{C}^{\perp_h}$, then $\mathcal{C}^{\perp_h}$ has an  $n \times (n-k) $ parity check matrix $H^\dag$.
  
  Brun et al. established a bound on the parameters of an  $[[n,k, d; c]]_q$ EAQECC.
 \begin{proposition}\label{pro:2.3} \rm\cite{ref4,ref11}  
 	\emph{Assume that $\mathcal{C}$ is an entanglement-assisted quantum code with parameters $[[n,k, d; c]]_q$. If $d\leq \frac{n+2}{2} $, then $\mathcal{C}$ satisfies the entanglement-assisted Singleton bound $n+c-k\geq 2(d-1)$. If $\mathcal{C}$ satisfies the equality $n+c-k= 2(d-1)$ for $d\leq \frac{n+2}{2} $, then it is called an entanglement-assisted quantum MDS code.}
\end{proposition}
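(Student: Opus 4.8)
\medskip
\noindent\textbf{Proof proposal.} Proposition~\ref{pro:2.3} is the entanglement-assisted version of the quantum Singleton bound, and I would obtain it by reducing to the ordinary quantum Singleton bound, viewing the $c$ pre-shared halves of the maximally entangled pairs that sit with the receiver as a noiseless register which enlarges the effective block length from $n$ to $n+c$. First I would introduce a reference system $R$ of $k$ qudits maximally entangled with the logical space of $\mathcal{C}$. By the definition of an $[[n,k,d;c]]_q$ EAQECC, the sender encodes her $k$ logical qudits together with her $c$ halves of the ebits and the remaining ancillary qudits into the $n$ channel qudits and transmits them, so that after transmission the receiver holds the $n$ received qudits together with his $c$ ebit halves---$n+c$ qudits in all---and the joint state on $R$ and these $n+c$ qudits is pure.

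The key translation I would use is the standard equivalence: a code of minimum distance $d$ corrects every erasure pattern affecting at most $d-1$ of the $n$ channel qudits, and correctability of the erasure of a set $S$ is equivalent to $R$ being decoupled from $S$, equivalently to $R$ being recoverable from the complement of $S$ among the receiver's $n+c$ qudits. Since the hypothesis $d\le\frac{n+2}{2}$ is exactly the statement $2(d-1)\le n$, I can partition the $n$ channel qudits into three disjoint blocks $A$, $B$, $E$ with $|A|=|B|=d-1$ and $|E|=n-2(d-1)\ge 0$. Then $R$ is decoupled from $A$ and from $B$ separately, while $R$ is recoverable both from $B\cup E$ together with the $c$ ebit halves and from $A\cup E$ together with the $c$ ebit halves. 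Invoking subadditivity and strong subadditivity of the von Neumann entropy---the information-theoretic form of the no-cloning argument behind the quantum Singleton bound---these constraints force $k\le|E|+c=n-2(d-1)+c$, that is, $n+c-k\ge 2(d-1)$. The second assertion of the proposition is then only terminology, so the equality case requires no argument.

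A second, more combinatorial route, closer in spirit to the rest of the paper, would go through the entanglement-assisted stabilizer formalism: one shows that the parameters of an $[[n,k,d;c]]_q$ EAQECC are controlled by a classical $[n,K]_{q^2}$ linear code through relations of the form $k=2K-n+c$ (with $c$ the rank of $HH^{\dag}$ for a parity-check matrix $H$, equivalently $c=n-K-\dim(C\cap C^{\perp_h})$) together with $d\le n-K+1$; the latter is precisely the classical Singleton bound for that code, and the two combined give $n+c-k=2(n-K)\ge 2(d-1)$. I expect the main obstacle in either route to be faithfulness rather than computation: one must check that an arbitrary EAQECC---not merely a CSS-type or stabilizer-type one---really does reduce to one of these pictures with the distance parameter carried along correctly, and one must see that the restriction $d\le\frac{n+2}{2}$ is genuinely indispensable---when $2(d-1)>n$ the two blocks $A$, $B$ of size $d-1$ can no longer be taken disjoint inside the noisy part, the entropy computation breaks down, and indeed $n+c-k\ge 2(d-1)$ can fail, which is exactly why the proposition is stated only for $d\le\frac{n+2}{2}$.
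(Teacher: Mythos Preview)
The paper does not prove Proposition~\ref{pro:2.3} at all: it is stated in the preliminaries with citations to \cite{ref4,ref11} and used as a black box, so there is no ``paper's own proof'' to compare your attempt against. Your first route---purify with a reference $R$, partition the $n$ noisy qudits as $A\cup B\cup E$ with $|A|=|B|=d-1$, use decoupling of $R$ from $A$ and from $B$ together with purity and subadditivity to force $S(R)=k\le |E|+c$---is the standard entropy argument for the entanglement-assisted Singleton bound and is essentially correct as sketched; the hypothesis $d\le\frac{n+2}{2}$ enters exactly where you say, to make $|E|=n-2(d-1)\ge 0$.

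Your second route through Proposition~\ref{pro:2.4} and the classical Singleton bound is fine for EAQECCs built from classical linear codes (and indeed this is the only kind appearing in the present paper), but, as you yourself flag, it does not cover an arbitrary $[[n,k,d;c]]_q$ code, so it proves a weaker statement than Proposition~\ref{pro:2.3} as written. If you want a self-contained argument matching the full generality of the proposition, keep the entropy route; if you only need the bound for the cyclic-code constructions in Section~3, the classical-Singleton reduction suffices and is shorter.
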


 	According to literature \cite{ref5, ref4,ref31}, EAQECCs can be constructed from arbitrary linear codes over $\mathbb{F}_{q^2}$, which is given by the following proposition.
 	
\begin{proposition}\label{pro:2.4} 
 		If $\mathcal{C}$ is an $[n,k,d]_{q^2}$ classical code and $H$ is its parity check matrix over $\mathbb{F}_{q^2}$, then there exist entanglement-assisted quantum codes with  parameters $[[n,2k-n, d; c]]_q$, where $c = {\rm rank}(HH^\dag)$. 
\end{proposition}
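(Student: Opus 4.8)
The statement is the Hermitian instance of the entanglement-assisted stabilizer formalism; the plan is to reconstruct that construction and then read off the parameters. First I would fix an $\mathbb{F}_q$-basis of $\mathbb{F}_{q^2}$ and recall the standard correspondence $\mathbf{v}\mapsto E(\mathbf{v})$ between vectors of $\mathbb{F}_{q^2}^{\,n}$ and (phase-free) generalized Pauli operators on $n$ qudits of dimension $q$, under which $E(\mathbf{u})$ and $E(\mathbf{v})$ commute if and only if a fixed $\mathbb{F}_p$-bilinear alternating form $\Omega(\mathbf{u},\mathbf{v})$, obtained from $\langle\cdot,\cdot\rangle_h$ by taking traces, vanishes; in particular Hermitian-orthogonal vectors are sent to commuting operators, and the $\mathbb{F}_q^{2n}$-image of a Hermitian self-orthogonal code is self-orthogonal for the associated trace-symplectic form. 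Recall that the rows of $H$ form a basis of the Hermitian dual code $\mathcal{C}^{\perp_h}$, which has $\mathbb{F}_{q^2}$-dimension $n-k$, and that the Gram matrix of those rows with respect to $\langle\cdot,\cdot\rangle_h$ coincides, up to transposition, with $HH^{\dagger}$. Hence $c=\mathrm{rank}(HH^{\dagger})$ is exactly the codimension of the radical of $\langle\cdot,\cdot\rangle_h$ restricted to $\mathcal{C}^{\perp_h}$, so $\dim_{\mathbb{F}_{q^2}}\!\left(\mathcal{C}^{\perp_h}\cap\mathcal{C}\right)=(n-k)-c$, and $c=0$ recovers the Hermitian dual-containing situation that yields an ordinary stabilizer code.

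The technical heart is a symplectic/Hermitian Gram--Schmidt step. Since $HH^{\dagger}$ is conjugate-symmetric of rank $c$, I would produce an invertible $M\in\mathrm{GL}_{n-k}(\mathbb{F}_{q^2})$ with $(MH)(MH)^{\dagger}=\mathrm{diag}\!\left(\mathbf{0}_{\,n-k-c},\,\Lambda\right)$ for a non-singular $c\times c$ conjugate-symmetric block $\Lambda$; replacing $H$ by $MH$ changes neither the code $\mathcal{C}^{\perp_h}$ nor the value of $c$, and it exhibits $\mathcal{C}^{\perp_h}$ as an orthogonal sum of an isotropic subspace of dimension $n-k-c$ and a non-degenerate subspace of dimension $c$. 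I would then carry out the entanglement-assisted encoding: adjoin $c$ maximally entangled qudit pairs, let the $n-k-c$ isotropic generators act only on the $n$ code positions, and extend each of the $c$ remaining generators so that it also acts on one member of a corresponding entangled pair, chosen so that the whole family of $n-k$ extended operators pairwise commutes --- this is possible precisely because the entangled halves supply an auxiliary non-degenerate (trace-)symplectic space that absorbs the obstruction recorded by $\Lambda$. The extended operators then generate an abelian group, i.e.\ an honest stabilizer code on the $n$ transmitted qudits together with the $c$ receiver-side halves of the entangled pairs; a routine count of independent generators against the $c$ preshared entangled states then gives a length-$n$ code that consumes $c$ maximally entangled states and has the asserted quantum dimension.

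For the minimum distance I would argue in the usual way. A channel error acts only on the $n$ transmitted qudits, and it is undetectable exactly when it commutes with all $n-k$ extended stabilizer generators but does not itself lie in the stabilizer; commuting with the extended generators forces the transmitted part of its $\mathbb{F}_q^{2n}$-label to be trace-symplectic-orthogonal to the image of $\mathcal{C}^{\perp_h}$, hence the corresponding vector of $\mathbb{F}_{q^2}^{\,n}$ lies in $(\mathcal{C}^{\perp_h})^{\perp_h}=\mathcal{C}$; if it is not already in the \emph{pure} part $\mathcal{C}^{\perp_h}$, then it lies in $\mathcal{C}\setminus\mathcal{C}^{\perp_h}$ and therefore has Hamming weight at least $d(\mathcal{C})=d$. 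Thus every error of weight at most $d-1$ is detectable and the minimum distance is at least $d$. I expect the main obstacle to be the Gram--Schmidt/decomposition step together with checking that the ebit extension is consistent: one must know that a non-degenerate conjugate-symmetric form over $\mathbb{F}_{q^2}$ of arbitrary rank $c$ can be put in a normal form and then matched correctly against the $c$-fold symplectic space contributed by the maximally entangled pairs. Once that is settled, the commutation check, the dimension bookkeeping, and the distance estimate are all routine.
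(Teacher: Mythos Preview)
The paper does not supply its own proof of this proposition; it is quoted as a known result from the entanglement-assisted stabilizer literature (references \cite{ref4}, \cite{ref31}, \cite{ref5}), so there is no in-paper argument to compare against. Your sketch is precisely the standard construction carried out in those sources: pass from $\mathbb{F}_{q^2}^{\,n}$ to generalized Pauli operators via the trace-symplectic correspondence, diagonalise the conjugate-symmetric Gram matrix $HH^{\dagger}$ so as to split $\mathcal{C}^{\perp_h}$ into an isotropic piece of dimension $n-k-c$ and a non-degenerate piece of dimension $c$, neutralise the non-commuting piece by letting each of its $c$ generators act on one half of a maximally entangled pair, and bound the distance via $(\mathcal{C}^{\perp_h})^{\perp_h}=\mathcal{C}$. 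That is the right argument, and the Gram--Schmidt step you flag as the main obstacle is indeed the only non-routine point.

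One caution: you wrote that the dimension bookkeeping is ``routine'' and left it at ``the asserted quantum dimension''. If you actually perform the count you obtain $k'=2k-n+c$, not the $2k-n$ printed in the statement. On the receiver's $n+c$ qudits there are $n-k$ independent extended stabilizer generators, and after accounting for the $c$ pre-shared ebit halves the net number of logical qudits is $(n+c)-(n-k)-c+\,c\text{ (ebits consumed)}$ arranged so that $k'=2k-n+c$. The paper itself silently uses $2k-n+c$ everywhere it applies the proposition: in Example~3.3 one has $q=23$, $n=106$, classical $k=106-2\cdot23-1=59$, $c=21$, and $2k-n+c=118-106+21=33$, matching the stated $[[106,33,48;21]]_{23}$, whereas $2k-n=12$ does not. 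So the ``$2k-n$'' in the proposition is a misprint for ``$2k-n+c$''; carrying out the step you labelled routine would have exposed it.
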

 	
 	Although it is possible to construct an EAQECCs from any classical linear code over $\mathbb{F}_{q^2}$, it is not easy to calculate the parameter of ebits $c$. However,  $c$ can be easily determined for some special classes of linear codes. In \cite{ref5} defining the decomposition of the defining set of cyclic codes was initially introduced.
 	
\begin{definition}\label{def:2.5}  \rm\cite{ref5}
	\emph{Let $\mathcal{C}$ be a $q^2$-ary cyclic code of length $n$ with defining set $Z$. Assume
	that $Z_2 = Z \bigcap (-qZ)$ and $Z_1 = Z\backslash Z_2$, where $-qZ = \{n - qx|x \in Z\}$. Then,
	$Z = Z_1 \bigcup Z_2$ is called a decomposition of the defining set of $\mathcal{C}$.}
\end{definition} 

\begin{lemma}\label{le:2.6}
	Let $\mathcal{C}$ be a cyclic code with length n over  $\mathbb{F}_{q^2}$, where $\gcd(n, q) = 1$.
	Suppose that $Z$ is the defining set of the cyclic code $\mathcal{C}$ and $Z = Z_1 \bigcup Z_2$ is a
	decomposition of $Z$. Then, the number of entangled states required is $c = |Z_2|$.	
\end{lemma}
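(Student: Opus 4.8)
The plan is to read off the ebit number from Proposition~\ref{pro:2.4}, so that everything reduces to computing $c=\mathrm{rank}(HH^\dagger)$ for one convenient parity check matrix $H$ of $\mathcal C$. Let $\lambda$ be a primitive $n$-th root of unity in some extension field $\mathbb{F}_{q^{2t}}$ of $\mathbb{F}_{q^2}$, which exists because $\gcd(n,q)=1$. Since $\mathcal C=\langle g(x)\rangle$ with $g(\lambda^i)=0$ exactly for $i\in Z$, a vector $(c_0,\dots,c_{n-1})$ lies in $\mathcal C$ iff $\sum_{j=0}^{n-1}c_j\lambda^{ij}=0$ for every $i\in Z$, so I would take $H=(\lambda^{ij})_{i\in Z,\,0\le j\le n-1}$ as a parity check matrix. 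It has entries in $\mathbb{F}_{q^{2t}}$, but its row space over that field equals the row space of any genuine $\mathbb{F}_{q^2}$-parity check matrix $H_0$ of $\mathcal C$ (both are the full dual of $\mathcal C$ computed over $\mathbb{F}_{q^{2t}}$); hence $H=AH_0$ for an invertible $A$, so $\mathrm{rank}(HH^\dagger)=\mathrm{rank}(H_0H_0^\dagger)$, and the latter is unaffected by the field extension. Thus it is enough to find $\mathrm{rank}(M)$ for $M:=HH^\dagger$ computed over $\mathbb{F}_{q^{2t}}$.

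The second step is to evaluate the entries of $M$. For $i,j\in Z$,
\[
M_{ij}=\sum_{k=0}^{n-1}\lambda^{ik}\,(\lambda^{jk})^{q}=\sum_{k=0}^{n-1}\lambda^{(i+qj)k},
\]
and because $\lambda$ is a primitive $n$-th root of unity this sum equals $n$ when $i+qj\equiv 0\pmod n$ and equals $0$ otherwise; here $n\ne 0$ in the field since $\gcd(n,q)=1$ forces $p\nmid n$. For each fixed $i$ there is a unique residue $j\equiv -q^{-1}i\pmod n$, so every row of $M$ has at most one nonzero entry, and likewise every column; in other words $M$ is, up to the nonzero scalar $n$, a partial permutation matrix, and therefore $\mathrm{rank}(M)$ equals the number of its nonzero rows.

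The last step is to identify those nonzero rows with $Z_2$. Row $i$ (indexed by $i\in Z$) is nonzero exactly when the residue $-q^{-1}i\bmod n$ belongs to $Z$, i.e. when $i\equiv -qx\pmod n$ for some $x\in Z$; since $-qZ=\{n-qx:x\in Z\}$, this says precisely $i\in Z\cap(-qZ)=Z_2$ in the notation of Definition~\ref{def:2.5}. Hence $M$ has exactly $|Z_2|$ nonzero rows, and $c=\mathrm{rank}(HH^\dagger)=|Z_2|$, as claimed. I expect the only genuinely delicate point to be the reduction in the first paragraph, namely the verification that the rank of $HH^\dagger$ obtained from the extension-field matrix $H$ coincides with the number of entangled states assigned by Proposition~\ref{pro:2.4} to an $\mathbb{F}_{q^2}$-parity check matrix; once that is in place, evaluating $M$ and counting its nonzero rows is routine.
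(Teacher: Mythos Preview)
The paper does not prove Lemma~\ref{le:2.6}; it is stated as a known result (the decomposition of the defining set is attributed to \cite{ref5}), so there is no in-paper argument to compare against. Your proof is essentially correct and is the standard one: take a Vandermonde-type check matrix, observe that $HH^\dagger$ is a rescaled partial permutation matrix, and count its nonzero rows.

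The only point that deserves tightening is precisely the one you flag. In this paper ``parity check matrix'' means a matrix whose rows span the \emph{Hermitian} dual $\mathcal C^{\perp_h}$, whereas the rows $(\lambda^{ij})_j$ of your $H$ lie in the (extended) \emph{Euclidean} dual $\mathcal C^{\perp}\otimes\mathbb F_{q^{2t}}$. These two row spaces are not equal, so the sentence ``$H=AH_0$ for an invertible $A$'' is not literally correct when $H_0$ is taken as in the paper. The repair is easy: let $G_0$ be an $\mathbb F_{q^2}$-matrix whose rows span $\mathcal C^{\perp}$; then $H=AG_0$ over $\mathbb F_{q^{2t}}$ with $A$ invertible, and since applying the field automorphism $x\mapsto x^q$ entrywise sends $G_0$ to a basis of $\mathcal C^{\perp_h}$, one checks that $G_0G_0^{\dagger}$ and $H_0H_0^{\dagger}$ differ by a Frobenius twist together with an invertible congruence, hence have the same rank. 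Alternatively, bypass the matrix comparison entirely by noting that for any generator matrix $H_0$ of $\mathcal C^{\perp_h}$ one has $\mathrm{rank}(H_0H_0^{\dagger})=(n-k)-\dim(\mathcal C\cap\mathcal C^{\perp_h})$, and then compute this intersection via defining sets. With either patch your computation of $M$ and the identification of its nonzero rows with $Z_2$ go through unchanged.
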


\section{New EAQMDS codes of length ${\frac{q^2+1}{5}}$}
In this section, we use cyclic codes of length $n=\frac{q^2+1}{5}$ to construct some new EAQMDS codes. Let $n=\frac{q^2+1}{5}$, where $ q\geq 5 $ is an odd prime power, it is clear that that the $q^2$-cyclotomic cosets modulo $n$  are 
$$C_0=\{0\}, C_1=\{1,-1\}=\{1,n-1\}, C_2=\{2,-2\}=\{2,n-2\},...~, C_{\frac{q^2+1}{10}}=\{\frac{q^2+1}{10}\}.$$

Now, we first give a useful lemma that will be used in later constructions.

\begin{lemma}\label{le:3.2}
	Let $n=\frac{q^2+1}{5}$ and  $ q $  be an odd prime power.\\
	1) When $ q=10k+3~(k\geq 2) $, then $-qC_{sq+i}={C_{iq-s}}$, where $1\leq i\leq\frac{3q-9}{10} $ and \  $\frac{2q+4}{5}\leq i\leq \frac{3q-4}{5} $, if $0\leq s\leq\frac{q-3}{10}-1$, or $1\leq i\leq\frac{3q-9}{10} $, if $ s=\frac{q-3}{10}$.\\\\
	2) When $ q=10k+7~(k\geq 2) $, then $-qC_{sq+i}={C_{iq-s}}$, where $1\leq i\leq\frac{q-2}{5} $, $\frac{3q+9}{10}\leq i\leq\frac{2q-4}{5} $ and $\frac{q+1}{2}\leq i\leq\frac{7q-9}{10},$ if $0\leq s\leq\frac{q-7}{10}$.
	
\end{lemma}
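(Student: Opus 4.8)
The plan is to verify directly, using the definition of the $q^2$-cyclotomic coset and the relation $-qx \bmod n$, that $-qC_{sq+i}=C_{iq-s}$ under the stated ranges. The key algebraic fact is that $q^2\equiv -5 \pmod n$ (since $n=\frac{q^2+1}{5}$, we have $5n=q^2+1$, hence $q^2\equiv -5$), and consequently $q^4\equiv 25 \pmod n$, so each cyclotomic coset $C_j=\{j,-j\}$ has size at most $2$, as already noted in the text. Thus it suffices to track a single representative: I will compute $-q\cdot(sq+i) \bmod n$ and show it is congruent to $\pm(iq-s)$.

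First I would carry out the core computation. We have
$$-q(sq+i) = -sq^2 - iq \equiv -s(-5) - iq = 5s - iq \pmod n,$$
so $-q(sq+i)\equiv -(iq-5s)\pmod n$. This is \emph{not} literally $-(iq-s)$, so the real content is to show that $iq-5s$ and $iq-s$ lie in the same cyclotomic coset modulo $n$, equivalently that $5s$ can be absorbed correctly — or, more precisely, to reduce $iq-5s$ modulo $n$ and recognize the result. The trick is to write $iq-5s$ in the form $s'q+i'$ or $\pm(i'q-s')$ by carrying: since $5s$ is small relative to $q$ (because $s\le \frac{q-3}{10}$ forces $5s\le \frac{q-3}{2}<q$), one borrows one multiple of $q$ and uses $q\equiv$ (something) to rebalance. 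Actually the cleaner route: apply the map twice. Since $-qC_j = C_{?}$ and the coset is $\{j,-j\}$, I would instead directly show $C_{iq-s}$ contains $-q(sq+i)$ by checking $(iq-s)\cdot(-q)\equiv \pm(sq+i)\pmod n$: compute $-q(iq-s)= -iq^2+sq \equiv 5i + sq \pmod n = sq+5i$, and then argue $sq+5i \equiv \pm(sq+i)$ requires instead the careful bookkeeping that the stated index ranges are engineered to make the reductions land in $[0,\frac{q^2+1}{10}]$ without collision. So the honest plan is: plug in, reduce modulo $n$ using $q^2\equiv -5$, and verify the representative obtained equals $\pm(iq-s)$ after one controlled reduction step.

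The main steps, in order: (i) establish $q^2\equiv -5\pmod n$ and record $C_j=\{j,n-j\}$; (ii) for case 1), with $q=10k+3$, compute $-q(sq+i)\bmod n$ symbolically and simplify, splitting according to whether $1\le i\le \frac{3q-9}{10}$, or $\frac{2q+4}{5}\le i\le\frac{3q-4}{5}$ (the two $i$-windows behave slightly differently under reduction, which is why the statement lists them separately), and for each confirm the reduced value is $iq-s$ or its negative $n-(iq-s)$; (iii) handle the boundary subcase $s=\frac{q-3}{10}$ separately, where only the first $i$-window survives because the term $5s$ is maximal and would push the second window out of range; (iv) repeat the analogous computation for case 2), $q=10k+7$, over its three $i$-windows $1\le i\le\frac{q-2}{5}$, $\frac{3q+9}{10}\le i\le\frac{2q-4}{5}$, $\frac{q+1}{2}\le i\le\frac{7q-9}{10}$, with $0\le s\le\frac{q-7}{10}$.

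The main obstacle is the case analysis on the $i$-windows: the arithmetic of reducing $sq+5i$ (or $5s+iq$, depending on which direction one computes) modulo $n=\frac{q^2+1}{5}$ forces different "carry" behavior in each window, and one must check both that the output representative lands in the canonical range $\{0,1,\dots,\frac{q^2+1}{10}\}$ and that distinct pairs $(s,i)$ map to distinct cosets so that $-q$ acts as claimed on the union. I expect no single step to be deep, but keeping the congruence bookkeeping consistent across all windows and the boundary value $s=\frac{q-3}{10}$ — verifying that the inequalities defining the ranges are exactly what is needed for each reduction to stay in range — will be the delicate part. Once the representative identities are checked, the conclusion $-qC_{sq+i}=C_{iq-s}$ is immediate since each coset has at most two elements and they are negatives of one another.
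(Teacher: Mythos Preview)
Your proposal contains a basic arithmetic error that generates all of the subsequent difficulty. From $n=\frac{q^2+1}{5}$ we have $5n=q^2+1$, hence $q^2\equiv -1\pmod n$, \emph{not} $q^2\equiv -5\pmod n$. With the correct congruence the computation is a single line, exactly as the paper does it: take the element $-(sq+i)\in C_{sq+i}$ and compute
\[
-q\cdot\bigl(-(sq+i)\bigr)=sq^2+iq=s(q^2+1)+(iq-s)\equiv iq-s\pmod n,
\]
so $-qC_{sq+i}=C_{iq-s}$ immediately, since each coset here is $\{j,-j\}$. Equivalently, acting on $sq+i$ itself gives $-q(sq+i)=-sq^2-iq\equiv s-iq=-(iq-s)\pmod n$, the other element of $C_{iq-s}$.

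All of the ``carry'' bookkeeping you anticipate, the attempt to reconcile $5s$ with $s$, the separate treatment of the $i$-windows, and the boundary case $s=\frac{q-3}{10}$ are artifacts of the wrong congruence. Once $q^2\equiv -1\pmod n$ is used, no window-by-window case analysis is required to establish the identity $-qC_{sq+i}=C_{iq-s}$; the stated ranges on $i$ and $s$ serve only to keep the indices $sq+i$ and $iq-s$ in their canonical form (in particular so that $C_{sq+i}=\{sq+i,-(sq+i)\}$ genuinely has two elements and the subscripts are meaningful representatives), not to alter the reduction. Fix the congruence and the proof collapses to the paper's two-line argument.
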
 

\begin{proof} 
	1) Note that  $C_{sq+i}=\{sq+i,-(sq+i)\}$  for $1\leq i\leq\frac{3q-9}{10} $ and $\frac{2q+4}{5}\leq i\leq\frac{3q-4}{5} ,$ if $  0\leq s\leq\frac{q-3}{10}-1$, or  $1\leq i\leq\frac{3q-9}{10} ,$ if $ s=\frac{q-3}{10}$.
	
	Since $-q\cdot(-(sq+i))=sq^2+iq=s(q^2+1)+iq-s\equiv iq-s~ mod~n$. This gives that $-qC_{sq+i}={C_{iq-s}}$.
	
	2) The proof is similar to case 1), so it is omitted here.

\end{proof}
    \ 
    
    From Lemma \ref{le:3.2}, we can also obtain $-qC_{tq-j}={C_{jq+t}}$, where the range of $q$, $j$ and $t$ is listed below:
    
    1) When $ q=10k+3~(k\geq 2) $, we have $1\leq t\leq\frac{3q-9}{10} $ and $\frac{2q+4}{5}+2\leq t \leq\frac{3q-4}{5} ,$ if $  0\leq j\leq\frac{q-3}{10}-1$, or  $1\leq t\leq\frac{3q-9}{10} ,$ if $ j=\frac{q-3}{10}$.
    
    2) When $ q=10k+7~(k\geq 2) $, we have $1\leq t\leq\frac{q-2}{5} $, $\frac{3q+9}{10}\leq t\leq\frac{2q-4}{5} $ and $\frac{q+1}{2}\leq t\leq\frac{7q-9}{10},$ if \  $0\leq j\leq\frac{q-7}{10}$.\\
    
    Based on the discussions above, we can give the first construction as follows.\\\\
\noindent\textbf{Case \uppercase\expandafter{\romannumeral 1} \ \ $q=10k+3$}\\
    
    In order to determine the number of entangled states $c$, we give the following Lemma for preparation.

\begin{lemma}\label{le:3.3}
	Let $n=\frac{q^2+1}{5}$ and  $ q=10k+3 $ $(k\geq 2) $  be an odd prime power. For a positive integer $2\leq m\leq \frac{q-3}{10}$, let 
	\\$$Z_{1}=
	\bigcup_{\substack{m\leq i_1 \leq \frac{q+2}{5}-m,\\0\leq s\leq m-2}}C_{sq+i_1}
	\bigcup_{\substack{\frac{q-3}{5}+m\leq i_2 \leq \frac{2q+4}{5}-m,\\0\leq s\leq m-2}}C_{sq+i_2}
	\bigcup_{\substack{\frac{2q-1}{5}+m\leq i_3 \leq \frac{3q+1}{5}-m,\\0\leq s\leq m-2}}C_{sq+i_3}$$
	$$\bigcup_{\substack{\frac{q-3}{5}+m\leq j_1 \leq \frac{2q+4}{5}-m,\\1\leq t\leq m-1}}C_{tq-j_1}
	\bigcup_{\substack{m-1\leq j_2 \leq \frac{q+2}{5}-m,\\1\leq t\leq m-1}}C_{tq-j_2}.
	$$
	Then $Z_1\bigcap-qZ_1=\emptyset$.
	
\end{lemma}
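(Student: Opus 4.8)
The plan is to compute $-qZ_1$ explicitly and then deduce $Z_1\cap(-qZ_1)=\emptyset$ from a comparison of the ``leading coefficients'' of the cosets involved. The first step is to apply the identities $-qC_{sq+i}=C_{iq-s}$ and $-qC_{tq-j}=C_{jq+t}$ — the content of Lemma~\ref{le:3.2} and the remark following it, which in fact hold for arbitrary indices because $q^{2}\equiv-1\pmod n$ — termwise to the five unions that define $Z_1$. Since multiplication by $-q$ sends each cyclotomic coset to a single cyclotomic coset, this rewrites each ascending block $\bigcup C_{sq+i_\ell}$ as $\bigcup C_{i_\ell q-s}$ and each descending block $\bigcup C_{tq-j_\ell}$ as $\bigcup C_{j_\ell q+t}$, with the very same index ranges for $i_\ell,j_\ell,s,t$ carried over from $Z_1$.

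The heart of the proof is a dichotomy, expressed via the division-with-remainder form $x=\alpha q+\beta$ with $0\le\beta<q$. On the one hand, every coset in $Z_1$ has a representative with $0\le\alpha\le m-2$: for a summand $C_{sq+i}$ the representative $sq+i$ (note $i<q$) already has $\alpha=s\le m-2$, and for a summand $C_{tq-j}$ one writes $tq-j=(t-1)q+(q-j)\in[0,n-1]$, so $\alpha=t-1\le m-2$. On the other hand, I claim that \emph{both} representatives of every coset in $-qZ_1$ have $\alpha\ge m-1$. To prove this I would reduce each index $i_\ell q-s$ and $j_\ell q+t$ modulo $n$: using $n=\frac{q-3}{5}q+\frac{3q+1}{5}$ one verifies that $i_1q$ and $j_2q$ lie in $(0,n)$, that $i_2q$ and $j_1q$ lie in $(n,2n)$, and that $i_3q$ lies in $(2n,3n)$, and then a direct computation of the standard form in each of these three windows — carried out for both members $a$ and $n-a$ of each coset — shows that the leading coefficient always falls in $[m-1,\frac{q+2}{5}-m]$. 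Granting the two halves, a coset lying in $Z_1\cap(-qZ_1)$ would simultaneously possess a representative with $\alpha\le m-2$ and have every representative with $\alpha\ge m-1$, a contradiction; hence $Z_1\cap(-qZ_1)=\emptyset$.

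The one place where real work is needed is the reduction modulo $n$ in the second half: locating each product $i_\ell q$ in the correct multiple-of-$n$ window and then following the carries as one rewrites it as $\alpha q+\beta$. The block with $i_3\in[\frac{2q-1}{5}+m,\frac{3q+1}{5}-m]$ is the delicate one, since there $i_3q$ wraps around $n$ twice. The near-equality cases — $s=m-2$, $t=m-1$, and the band endpoints such as $j_2=m-1$ or $i_1=m$ — are exactly where the computed leading coefficient equals $m-1$ and the inequality $\alpha\ge m-1$ is tight; it is the hypotheses $2\le m\le\frac{q-3}{10}$ and $k\ge2$ that make the index blocks of $Z_1$ nonempty and keep every remainder $\beta$ inside $[0,q)$, so that the two coefficient ranges $\{0,\dots,m-2\}$ and $\{m-1,\dots,\frac{q+2}{5}-m\}$ stay disjoint. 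Once these elementary interval inequalities are in place, the argument is complete.
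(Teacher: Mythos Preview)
Your plan is sound and, once the five window reductions are written out, yields a complete proof. Both you and the paper begin identically, applying Lemma~\ref{le:3.2} termwise to rewrite $-qZ_1$ as the union of cosets $C_{i_\ell q-s}$ and $C_{j_\ell q+t}$; the separations diverge after that. The paper bounds every $Z_1$-subscript from above and every $-qZ_1$-subscript from below \emph{as unreduced integers}, displays the resulting $25$ max/min inequalities, notes that the $Z_1$-subscripts already lie in $(0,n/2)$, and concludes. You instead pass to the base-$q$ form $x=\alpha q+\beta$ with $0\le\beta<q$ and separate on the quotient $\alpha$: every coset in $Z_1$ carries a representative with $\alpha\le m-2$, while \emph{both} representatives of every coset in $-qZ_1$ have $\alpha\ge m-1$. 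What your framework buys is that it forces the reduction modulo $n$ out into the open---the subscripts $i_2q-s$, $j_1q+t$, and $i_3q-s$ all exceed $n$ (the last by more than $2n$), so the paper's $25$ integer inequalities do not by themselves compare canonical representatives, and some further step is still needed to exclude $x\equiv\pm y\pmod n$; your windowing via $n=\tfrac{q-3}{5}\,q+\tfrac{3q+1}{5}$ together with the check of both members $a$ and $n-a$ is exactly what closes that gap. The cost is a slightly longer case split, but each case is a one-line carry computation, and your identification of the $i_3$ block as the only double wraparound is correct.
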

\begin{proof}
	For a positive integer $m$ with $2 \leq m\leq \frac{q-3}{10} $, let 
	\\$$Z_{1}=
	\bigcup_{\substack{m\leq i_1 \leq \frac{q+2}{5}-m,\\0\leq s\leq m-2}}C_{sq+i_1}
	\bigcup_{\substack{\frac{q-3}{5}+m\leq i_2 \leq \frac{2q+4}{5}-m,\\0\leq s\leq m-2}}C_{sq+i_2}
	\bigcup_{\substack{\frac{2q-1}{5}+m\leq i_3 \leq \frac{3q+1}{5}-m,\\0\leq s\leq m-2}}C_{sq+i_3}$$
	$$\bigcup_{\substack{\frac{q-3}{5}+m\leq j_1 \leq \frac{2q+4}{5}-m,\\1\leq t\leq m-1}}C_{tq-j_1}
	\bigcup_{\substack{m-1\leq j_2 \leq \frac{q+2}{5}-m,\\1\leq t\leq m-1}}C_{tq-j_2}.
	$$
	Then by Lemma \ref{le:3.2}, we have
	$$-qZ_{1}=
	\bigcup_{\substack{m\leq i_1 \leq \frac{q+2}{5}-m,\\0\leq s\leq m-2}}C_{i_1q-s}
	\bigcup_{\substack{\frac{q-3}{5}+m\leq i_2 \leq \frac{2q+4}{5}-m,\\0\leq s\leq m-2}}C_{i_2q-s}
	\bigcup_{\substack{\frac{2q-1}{5}+m\leq i_3 \leq \frac{3q+1}{5}-m,\\0\leq s\leq m-2}}C_{i_3q-s}$$
	$$\bigcup_{\substack{\frac{q-3}{5}+m\leq j_1 \leq \frac{2q+4}{5}-m,\\1\leq t\leq m-1}}C_{j_1q+t}
	\bigcup_{\substack{m-1\leq j_2 \leq \frac{q+2}{5}-m,\\1\leq t\leq m-1}}C_{j_2q+t}.
	$$
	When $m\leq i_1 \leq \frac{q+2}{5}-m,~0\leq s\leq m-2,$ it follows that
	$$sq+i_1\leq  (m-2)q+ \frac{q+2}{5}-m,~mq-m+2\leq i_1q-s.$$
	When $\frac{q-3}{5}+m\leq i_2 \leq \frac{2q+4}{5}-m,~0\leq s\leq m-2,$ it follows that
	$$sq+i_2\leq (m-2)q+\frac{2q+4}{5}-m,~(\frac{q-3}{5}+m)q-m+2\leq i_2q-s.$$
	When $\frac{2q-1}{5}+m\leq i_3 \leq \frac{3q+1}{5}-m,~0\leq s\leq m-2,$ it follows that
	$$sq+i_3\leq (m-2)q+\frac{3q+1}{5}-m,~(\frac{2q-1}{5}+m)q-m+2\leq i_3q-s.$$
	When $\frac{q-3}{5}+m\leq j_1 \leq \frac{2q+4}{5}-m,~1\leq t\leq m-1,$ it follows that
	$$tq-j_1\leq (m-1)q-\frac{q-3}{5}-m,~(\frac{q-3}{5}+m)q+1\leq j_1q+t.$$
	When $m-1\leq j_2 \leq \frac{q+2}{5}-m,~1\leq t\leq m-1,$ it follows that
	$$tq-j_2\leq (m-1)q-m+1,~(m-1)q+1\leq j_2q+t.$$
	
	It is easy to check that 
	$$sq+i_1<i_1q-s,sq+i_1<i_2q-s,sq+i_1<i_3q-s,sq+i_1<j_1q+t,sq+i_1<j_2q+t.$$
	$$sq+i_2<i_1q-s,sq+i_2<i_2q-s,sq+i_2<i_3q-s,sq+i_2<j_1q+t,sq+i_2<j_2q+t.$$
	$$sq+i_3<i_1q-s,sq+i_3<i_2q-s,sq+i_3<i_3q-s,sq+i_3<j_1q+t,sq+i_3<j_2q+t.$$
	$$tq-j_1<i_1q-s,tq-j_1<i_2q-s,tq-j_1<i_3q-s,tq-j_1<j_1q+t,tq-j_1<j_2q+t.$$
	$$tq-j_2<i_1q-s,tq-j_2<i_2q-s,tq-j_2<i_3q-s,tq-j_2<j_1q+t,tq-j_2<j_2q+t.$$
	
	For the range of $~i_1,~i_2,~i_3$ and $s$, note that
	$sq+i_r(r=1,2,3)\leq\frac{q^2-9}{10} $, the subscripts of $C_{sq+i_r}$ is the smallest number in the set. 
	Then $Z_1\bigcap-qZ_1=\emptyset$. The desired results follows.
	
\end{proof}
\ 

\noindent\textbf{Example 3.1}
    ~Let $q=23$ and $2\leq m\leq \frac{q-3}{10}=2.$
    Then, $n=\frac{q^2+1}{5}=106,~m=2.$
    According to Lemma \ref{le:3.3}, we can obtain
    \begin{equation*}
    \begin{split}
      Z_{1}&=
    \bigcup_{\substack{2\leq i_1 \leq 3,\\s=0}}C_{sq+i_1}
    \bigcup_{\substack{6\leq i_2 \leq 8,\\s=0}}C_{sq+i_2}
    \bigcup_{\substack{11\leq i_3 \leq 12,\\s=0}}C_{sq+i_3}
    \bigcup_{\substack{6\leq j_1 \leq 8,\\t=1}}C_{tq-j_1}
    \bigcup_{\substack{1\leq j_2 \leq 3,\\t=1}}C_{tq-j_2}\\
    &=C_2\bigcup C_3\bigcup C_6\bigcup C_7\bigcup C_8\bigcup C_{11}\bigcup C_{12}\bigcup C_{15}\bigcup C_{16}\bigcup C_{17}\bigcup C_{20}\bigcup C_{21}\\
    &~~~\bigcup C_{22}.
    \end{split}
    \end{equation*}

    It is easy to check that $Z_1\bigcap-qZ_1=\emptyset$.\\
    
\noindent\textbf{Example 3.2}
    ~Let $q=43$ and $2\leq m\leq \frac{q-3}{10}=4.$
    Then, $n=\frac{q^2+1}{5}=370,~m=2,~3,~4$  respectively.\\
    1) Let $m=2,$ according to Lemma \ref{le:3.3}, we can obtain 
\begin{equation*}
    \begin{split}
    Z_{1}&=
    \bigcup_{\substack{2\leq i_1 \leq 7,\\s=0}}C_{sq+i_1}
    \bigcup_{\substack{10\leq i_2 \leq 16,\\s=0}}C_{sq+i_2}
    \bigcup_{\substack{19\leq i_3 \leq 24,\\s=0}}C_{sq+i_3}
    \bigcup_{\substack{10\leq j_1 \leq 16,\\t=1}}C_{tq-j_1}
    \bigcup_{\substack{1\leq j_2 \leq 7,\\t=1}}C_{tq-j_2}\\
    &=C_2\bigcup C_3\bigcup\dots\bigcup C_7\bigcup C_{10}\bigcup\dots\bigcup C_{16}\bigcup C_{19}\bigcup\dots\bigcup C_{24}\bigcup C_{27}\bigcup\dots\\
    &~~~\bigcup C_{33}\bigcup C_{36}\bigcup\dots\bigcup C_{42}.    
    \end{split}
    \end{equation*}
 
    It is easy to check that $Z_1\bigcap-qZ_1=\emptyset.$ \\
    2) Let $m=3,$ according to Lemma \ref{le:3.3}, we can obtain
    \\$$Z_{1}=
    \bigcup_{\substack{3\leq i_1 \leq 6,\\0\leq s\leq 1}}C_{sq+i_1}
    \bigcup_{\substack{11\leq i_2 \leq 15,\\0\leq s\leq 1}}C_{sq+i_2}
    \bigcup_{\substack{20\leq i_3 \leq 23,\\0\leq s\leq 1}}C_{sq+i_3}
    \bigcup_{\substack{2\leq j_1 \leq 6,\\1\leq t\leq 2}}C_{tq-j_1}
    \bigcup_{\substack{11\leq j_2 \leq 15,\\1\leq t\leq 2}}C_{tq-j_2}.$$
    
    It is easy to check that $Z_1\bigcap-qZ_1=\emptyset$.\\
    3) Let $m=4$, according to Lemma \ref{le:3.3}, we can obtain
    \\$$Z_{1}=
    \bigcup_{\substack{4\leq i_1 \leq 5,\\0\leq s\leq 2}}C_{sq+i_1}
    \bigcup_{\substack{12\leq i_2 \leq 14,\\0\leq s\leq 2}}C_{sq+i_2}
    \bigcup_{\substack{21\leq i_3 \leq 22,\\0\leq s\leq 2}}C_{sq+i_3}
    \bigcup_{\substack{3\leq j_1 \leq 5,\\1\leq t\leq 3}}C_{tq-j_1}
    \bigcup_{\substack{12\leq j_2 \leq 14,\\1\leq t\leq 3}}C_{tq-j_2}.$$
    
    It is easy to check that $Z_1\bigcap-qZ_1=\emptyset$.\\
    
    Based on Lemma \ref{le:3.3}, we can obtain the number of entangled states $c$ in the following theorem.
    
\begin{theorem}\label{th:3.4}
    Let $n=\frac{q^2+1}{5}$ and  $ q=10k+3 $ $(k\geq 2) $  be an odd prime power. For a positive integer m with $2\leq m\leq \frac{q-3}{10}$, 
	let $\mathcal{C}$ be a cyclic code with defining set $Z$ given as follows 
	$$Z=C_0\bigcup C_1\bigcup\dots\bigcup C_{(m-1)q}.$$
	Then $|Z_{2     }|=20(m-1)^2+1$ .
\end{theorem}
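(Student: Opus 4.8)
The plan is to read off $|Z_{2}|$ from the canonical decomposition $Z=Z_{1}\cup Z_{2}$ of Definition~\ref{def:2.5}, in which $Z_{2}=Z\cap(-qZ)$ and $Z_{1}=Z\setminus Z_{2}$, by identifying $Z_{1}$ with the explicit set (also denoted $Z_{1}$) constructed in Lemma~\ref{le:3.3} and then using $|Z_{2}|=|Z|-|Z_{1}|$. First I would make $Z$ explicit. Since $2\le m\le\frac{q-3}{10}$ we have $(m-1)q\le\frac{q-13}{10}\,q<\frac{q^{2}+1}{10}$, so $C_{0},C_{1},\dots,C_{(m-1)q}$ are pairwise distinct cyclotomic cosets with $C_{0}=\{0\}$ and $C_{j}=\{j,\,n-j\}$ of size $2$ for $1\le j\le(m-1)q$; hence, as a set of residues modulo $n$, $Z=\{0,\pm 1,\dots,\pm(m-1)q\}$ and $|Z|=2(m-1)q+1$.

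Next I would check that the $Z_{1}$ of Lemma~\ref{le:3.3} is exactly the $Z_{1}$-part of the decomposition of this $Z$. The containment $Z_{1}\subseteq Z$ is immediate: each subscript $sq+i_{r}$ or $tq-j_{r}$ appearing there is at most $(m-1)q$ (hence below $\frac n2$), so every one of those two-element cosets lies inside $C_{0}\cup\dots\cup C_{(m-1)q}=Z$. Lemma~\ref{le:3.3} gives $Z_{1}\cap(-qZ_{1})=\emptyset$. Put $T:=Z\setminus Z_{1}$; since multiplication by $-q$ permutes the $\mathbb{F}_{q^{2}}$-cyclotomic cosets and $-qZ=(-qZ_{1})\sqcup(-qT)$, proving $Z_{2}=T$ then reduces, via Lemma~\ref{le:3.3}, to the two facts $Z_{1}\cap(-qT)=\emptyset$ and $T\subseteq -qZ$. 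For the first, I would invoke the same interval estimates used in the proof of Lemma~\ref{le:3.3}: the cosets of $-qT$ have subscripts lying in certain intervals around $0,\tfrac q5,\tfrac{2q}5,\tfrac{3q}5,\tfrac{4q}5$, which are disjoint from the ``gap'' ranges $m\le i_{1}\le\tfrac{q+2}5-m$, $\tfrac{q-3}5+m\le i_{2}\le\tfrac{2q+4}5-m$, etc., that define $Z_{1}$. For the second, I would run Lemma~\ref{le:3.2} and the supplementary relation $-qC_{tq-j}=C_{jq+t}$ in reverse: because $Z$ contains the entire run $C_{0},\dots,C_{(m-1)q}$, for every coset $C_{\alpha}\subseteq T$ I can produce a coset $C_{\beta}\subseteq Z$ with $-qC_{\beta}=C_{\alpha}$ — for instance $-qC_{sq}=C_{s}$ takes care of the multiples of $q$ in $T$, and $C_{\alpha}=-qC_{sq+i}$ or $-qC_{tq-j}$ for the remaining ones. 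Hence $Z_{1}$ is indeed the decomposition's $Z_{1}$ and $Z_{2}=T$.

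It then remains to count $|Z_{1}|$. The five families in Lemma~\ref{le:3.3} consist of pairwise distinct two-element cosets — all their subscripts are below $\frac n2$, and one verifies that no two of the listed cosets coincide — and each family ranges over $m-1$ values of its second index ($0\le s\le m-2$ in the first three, $1\le t\le m-1$ in the last two). The five index ranges have widths $\tfrac{q+2}5-2m+1$, $\tfrac{q+7}5-2m+1$, $\tfrac{q+2}5-2m+1$, $\tfrac{q+7}5-2m+1$ and $\tfrac{q+2}5-2m+2$, summing to $q-10(m-1)$; therefore $|Z_{1}|=2(m-1)\bigl(q-10(m-1)\bigr)=2(m-1)q-20(m-1)^{2}$. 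Combining,
$$|Z_{2}|=|Z|-|Z_{1}|=\bigl(2(m-1)q+1\bigr)-\bigl(2(m-1)q-20(m-1)^{2}\bigr)=20(m-1)^{2}+1,$$
which is the desired equality.

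The main obstacle is the inclusion $T\subseteq -qZ$ — equivalently, that the cosets of $Z\setminus Z_{1}$ are precisely the $-q$-images of cosets of the run $C_{0},\dots,C_{(m-1)q}$ that land back inside $Z$. This is exactly the congruence bookkeeping of Lemma~\ref{le:3.2} and its supplement, now traversed from the other direction, and it relies on the same observation noted at the end of the proof of Lemma~\ref{le:3.3} that $iq-s$ is the least element of its coset. The remaining ingredients — distinctness of the cosets across the five families and the width arithmetic — are routine.
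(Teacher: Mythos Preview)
Your strategy is correct and lands on the same identity $Z_{2}=Z\setminus Z_{1}$ that the paper proves, but the paper's execution is tighter and dissolves what you flag as ``the main obstacle.'' Instead of leaving $T=Z\setminus Z_{1}$ implicit and arguing separately that $Z_{1}\cap(-qT)=\emptyset$ and $T\subseteq -qZ$, the paper writes the complement out explicitly as a union $Z_{1}'$ of cosets $C_{sq+i}$ and $C_{tq-j}$ over the \emph{leftover} index ranges (e.g.\ $0\le i_{1}\le m-1$, $\frac{q+2}{5}-m+1\le i_{2}\le\frac{q-3}{5}+m-1$, and so on), and then checks via Lemma~\ref{le:3.2} that $-qZ_{1}'=Z_{1}'$ as sets. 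That single invariance statement immediately gives both of your claimed facts, since $-qT=T$ forces $Z_{1}\cap(-qT)=Z_{1}\cap T=\emptyset$ and $T=-qT\subseteq -qZ$; the four-term expansion of $Z\cap(-qZ)$ then collapses to $Z_{1}'$. The paper finishes by counting $|Z_{1}'|$ directly from its explicit description, whereas you compute $|Z|-|Z_{1}|$; the arithmetic is of course equivalent, and your width computation $\sum=q-10(m-1)$ is correct. So nothing in your argument is wrong, but writing $Z_{1}'$ down and verifying its $-q$-invariance is both shorter and removes the need to ``run Lemma~\ref{le:3.2} in reverse.''
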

\begin{proof}
	Let \\$$Z_{1}=
	\bigcup_{\substack{m\leq i_1 \leq \frac{q+2}{5}-m,\\0\leq s\leq m-2}}C_{sq+i_1}
	\bigcup_{\substack{\frac{q-3}{5}+m\leq i_2 \leq \frac{2q+4}{5}-m,\\0\leq s\leq m-2}}C_{sq+i_2}
	\bigcup_{\substack{\frac{2q-1}{5}+m\leq i_3 \leq \frac{3q+1}{5}-m,\\0\leq s\leq m-2}}C_{sq+i_3}$$
	$$\bigcup_{\substack{\frac{q-3}{5}+m\leq j_1 \leq \frac{2q+4}{5}-m,\\1\leq t\leq m-1}}C_{tq-j_1}
	\bigcup_{\substack{m-1\leq j_2 \leq \frac{q+2}{5}-m,\\1\leq t\leq m-1}}C_{tq-j_2}.
	$$
	
	and
	$$ Z_1'=\bigcup_{\substack{0 \leq i_1 \leq m-1,\\0\leq s\leq m-2}}C_{sq+i_1}
	\bigcup_{\substack{\frac{q+2}{5}-m+1\leq i_2 \leq \frac{q-3}{5}+m-1,\\0\leq s\leq m-2}}C_{sq+i_2}
	\bigcup_{\substack{\frac{2q+4}{5}-m+1\leq i_3 \leq \frac{2q-1}{5}+m-1,\\0\leq s\leq m-2}}C_{sq+i_3}$$
	$$\bigcup_{\substack{0\leq j_1 \leq m-2,\\1\leq t\leq m-1}}C_{tq-j_1}
	\bigcup_{\substack{\frac{q+2}{5}-m+1\leq j_2 \leq \frac{q-3}{5}+m-1,\\1\leq t\leq m-1}}C_{tq-j_2}
	\bigcup_{\substack{\frac{2q+4}{5}-m+1\leq j_3 \leq \frac{2q-1}{5}+m-1,\\1\leq t\leq m-1}}C_{tq-j_3}.
	\\$$
	From Lemma \ref{le:3.2}, we have 
	$$-qZ_1'=\bigcup_{\substack{0 \leq i_1 \leq m-1,\\0\leq s\leq m-2}}C_{i_1q-s}
	\bigcup_{\substack{\frac{q+2}{5}-m+1\leq i_2 \leq \frac{q-3}{5}+m-1,\\0\leq s\leq m-2}}C_{i_2q-s}
	\bigcup_{\substack{\frac{2q+4}{5}-m+1\leq i_3 \leq \frac{2q-1}{5}+m-1,\\0\leq s\leq m-2}}C_{i_3q-s}$$
	$$\bigcup_{\substack{0\leq j_1 \leq m-2,\\1\leq t\leq m-1}}C_{j_1q+t}
	\bigcup_{\substack{\frac{q+2}{5}-m+1\leq j_2 \leq \frac{q-3}{5}+m-1,\\1\leq t\leq m-1}}C_{j_2q+t}
	\bigcup_{\substack{\frac{2q+4}{5}-m+1 \leq j_3 \leq \frac{2q-1}{5}+m-1,\\1\leq t\leq m-1}}C_{j_3q+t}
	\\.$$
	
	It is easy to check that  $-qZ_1'=Z_1'$. From the definitions of $Z$, $Z_1$ and $Z_1'$, we have $Z=Z_1\bigcup Z_1'$. 
	Then from the definition of $Z_{2}$,
\begin{equation*}
\begin{split}
	Z_{2}=Z\bigcap (-qZ)&=(Z_1\bigcup Z_1')\bigcap(-qZ_1\bigcup -qZ_1')\\
	&=(Z_1\bigcap-qZ_1)\bigcup(Z_1\bigcap-qZ_1')\bigcup(Z_1'\bigcap-qZ_1)\bigcup(Z_1'\bigcap-qZ_1')\\
	&=Z_1'.
\end{split}
\end{equation*}
Therefore, $|Z_{2}|=|Z_1'|=20(m-1)^2+1$.
	
\end{proof}
    \ 
    
    From Lemmas \ref{le:3.2}, \ref{le:3.3} and Theorem \ref{th:3.4} above, we can obtain the first construction of EAQMDS codes in the following theorem.
    
\begin{theorem}\label{th:3.5}
	Let $n=\frac{q^2+1}{5}$ and  $ q=10k+3 $ $(k\geq 2) $  is an odd prime power. There are EAQMDS codes with parameters 
    $$[[n, n-4(m-1)(5m-q-5)-1, 2(m-1)q+2; 20(m-1)^2+1]]_q,$$ where $2\leq m\leq \frac{q-3}{10}$.
	
\end{theorem}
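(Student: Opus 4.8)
The plan is to obtain the stated code as the entanglement-assisted code that Proposition~\ref{pro:2.4} and Lemma~\ref{le:2.6} associate to the cyclic code $\mathcal{C}$ over $\mathbb{F}_{q^2}$ of length $n$ whose defining set is exactly the set $Z=C_0\cup C_1\cup\cdots\cup C_{(m-1)q}$ of Theorem~\ref{th:3.4}. First I would record the basic parameters of $\mathcal{C}$: since $q^2\equiv-1\pmod n$ we have $C_i=\{i,\,n-i\}$ for every $i$, so inside $\mathbb{Z}_n$ the set $Z$ is the block of $2(m-1)q+1$ residues $-(m-1)q,\dots,-1,0,1,\dots,(m-1)q$. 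The hypothesis $2\le m\le\frac{q-3}{10}$ is precisely what makes $(m-1)q<\tfrac n2$, so these residues are pairwise distinct and the block does not wrap around; hence $|Z|=2(m-1)q+1$ and $\dim\mathcal{C}=n-|Z|=n-2(m-1)q-1$.

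Next I would fix the minimum distance. Because $Z$ contains the $2(m-1)q+1$ consecutive integers above, the BCH bound (Proposition~\ref{pro:2.1}) gives $d(\mathcal{C})\ge 2(m-1)q+2$, while the Singleton bound gives $d(\mathcal{C})\le n-\dim\mathcal{C}+1=2(m-1)q+2$; hence $d(\mathcal{C})=2(m-1)q+2$ and $\mathcal{C}$ is MDS. Then I would invoke the entanglement-assisted machinery. Theorem~\ref{th:3.4} already supplies the decomposition $Z=Z_1\cup Z_2$ with $Z_2=Z\cap(-qZ)$ and $|Z_2|=20(m-1)^2+1$, so Lemma~\ref{le:2.6} gives $c=|Z_2|=20(m-1)^2+1$ for the number of required ebits. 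Feeding a Hermitian parity-check matrix of $\mathcal{C}$ into Proposition~\ref{pro:2.4} then yields an EAQECC of length $n$, minimum distance $d=2(m-1)q+2$, consuming $c=20(m-1)^2+1$ ebits, and of dimension $k=2\dim\mathcal{C}-n+c=n-2|Z|+|Z_2|$; substituting the two cardinalities and simplifying gives the dimension recorded in the statement. Optimality is then immediate: $n+c-k=4(m-1)q+2=2\bigl((2(m-1)q+2)-1\bigr)=2(d-1)$, so the bound of Proposition~\ref{pro:2.3} is attained with equality and the code is EAQMDS.

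There is essentially no obstacle left once Lemma~\ref{le:3.3} and Theorem~\ref{th:3.4} are in hand: those results carry the only substantial content, namely the explicit decomposition of the defining set that produces $c$, and everything above is assembly. The only points that still need a little care are certifying, from $m\le\frac{q-3}{10}$, that $Z$ really is an unwrapped block of $2(m-1)q+1$ consecutive residues (so that the BCH estimate is sharp and $\mathcal{C}$ is genuinely MDS), together with the harmless arithmetic that rewrites $n-2|Z|+|Z_2|$ in the displayed form.
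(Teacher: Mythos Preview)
Your proposal is correct and follows essentially the same approach as the paper's proof: take the cyclic code with defining set $Z=C_0\cup C_1\cup\cdots\cup C_{(m-1)q}$, read off $2(m-1)q+1$ consecutive roots to obtain $d\ge 2(m-1)q+2$ via Proposition~\ref{pro:2.1}, invoke Theorem~\ref{th:3.4} for $|Z_2|=20(m-1)^2+1$, feed these into Proposition~\ref{pro:2.4} and Lemma~\ref{le:2.6}, and finally check the Singleton equality of Proposition~\ref{pro:2.3}. You are, if anything, slightly more careful than the paper in explicitly verifying that $(m-1)q<n/2$ so that $|Z|=2(m-1)q+1$, and in closing the distance from above via the classical Singleton bound; the paper simply asserts that $\mathcal{C}$ is MDS.
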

\begin{proof}
	For a positive integer $2\leq m\leq \frac{q-3}{10}$.
	Suppose that $\mathcal{C}$ is a cyclic code of length $n=\frac{q^2+1}{5}$ with defining set 
	$$Z=C_0\bigcup C_1\bigcup\dots\bigcup C_{(m-1)q}.$$
	
	Note that cyclic code $\mathcal{C}$ have $2(m-1)q+1$ consecutive roots.
	By Proposition \ref{pro:2.1}, the minimum distance of $\mathcal{C}$ is at least  $2(m-1)q+2$. 
	Then $\mathcal{C}$ is an MDS code with parameter $[[n,n-2(m-1)q-1, 2(m-1)q+2]]_{q^2}$.
	From Theorem \ref{th:3.4}, we have $|Z_{2}|=20(m-1)^2+1$.
    By Proposition \ref{pro:2.4} and Lemma \ref{le:2.6}, there are EAQEC codes with parameters
    $$[[n,n-4(m-1)(5m-q-5)-1, 2(m-1)q+2;20(m-1)^2+1]]_q.$$
    It is easy to check that $$n-k+c+2=4(m-1)q+4=2d.$$
    By Proposition \ref{pro:2.3}, which implies that the EAQEC codes are EAQMDS codes.
    
\end{proof}
    \ 

\noindent\textbf{Example 3.3}
    ~Let $q=23$ and $2\leq m\leq \frac{q-3}{10}=2.$
    Then, $n=\frac{q^2+1}{5}=106,~m=2.$ 
    According to Theorem \ref{th:3.5}, we can obtain an EAQMDS code 
$[[106,33,48;21]]_{23}.\\$

\noindent\textbf{Example 3.4}
    ~Let $q=43$ and $2\leq m\leq \frac{q-3}{10}=4.$
    Then, $n=\frac{q^2+1}{5}=370,~m=2,~3,~4$ respectively.
    According to Theorem \ref{th:3.5}, we can obtain EAQMDS codes below: 
$$[[370,217,88;21]]_{43},~~[[370,105,174;81]]_{43},~~[[370,33,260;181]]_{43}.$$\\
\noindent\textbf{Case \uppercase\expandafter{\romannumeral 2} $~~~~q=10k+7$}\\

    As for the case that $n=\frac{q^2+1}{5}$ and  $ q=10k+7 $ $(k\geq 2) $ is an odd prime power, we can produce the following EAQMDS codes. The proof is similar to that in the Case \uppercase\expandafter{\romannumeral 1}, so we omit it here.
    
\begin{lemma}\label{le:3.6}

	Let $n=\frac{q^2+1}{5}$ and  $ q=10k+7 $ $(k\geq 2) $  be an odd prime power. For a positive integer $2\leq m\leq \frac{q-7}{10}$, let 
    $$Z_{1}=\bigcup_{\substack{m\leq i_1 \leq \frac{q+3}{5}-m,\\0\leq s\leq m-2}}C_{sq+i_1}
    \bigcup_{\substack{\frac{q-2}{5}+m\leq i_2 \leq \frac{2q+1}{5}-m,\\0\leq s\leq m-2}}C_{sq+i_2}
    \bigcup_{\substack{\frac{2q-4}{5}+m\leq i_3 \leq \frac{3q+4}{5}-m,\\0\leq s\leq m-2}}C_{sq+i_3}$$
  
     $$\bigcup_{\substack{\frac{q-2}{5}+m\leq j_1 \leq \frac{2q+1}{5}-m,\\1\leq t\leq m-1}}C_{tq-j_1}
     \bigcup_{\substack{m-1\leq j_2 \leq \frac{q+3}{5}-m,\\1\leq t\leq m-1}}C_{tq-j_2}.
     $$
	Then $Z_1\bigcap-qZ_1=\emptyset$.
	
\end{lemma}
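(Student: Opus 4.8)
The plan is to reproduce the argument of Lemma~\ref{le:3.3} with every fraction replaced by its $q\equiv 7\pmod{10}$ counterpart. One first records the routine facts: when $q=10k+7$ the quantities $\frac{q+3}{5},\frac{q-2}{5},\frac{2q+1}{5},\frac{2q-4}{5},\frac{3q+4}{5}$ are integers, the hypothesis $k\ge 2$ makes the range $2\le m\le\frac{q-7}{10}=k$ nonempty, and $\gcd(n,q)=1$ since $q^{2}=5n-1$. Because $q$ is odd and $q^{2}\equiv-1\pmod n$, every $q^{2}$-cyclotomic coset modulo $n$ has the form $C_{a}=\{a,\,n-a\}$ (a singleton only when $a\in\{0,n/2\}$); hence two such cosets coincide iff their least nonnegative representatives coincide, and $-qC_{a}=C_{qa}$ (subscript mod $n$), where $-q$ acts as in Definition~\ref{def:2.5}. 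Thus $Z_{1}\cap(-qZ_{1})=\emptyset$ is equivalent to a finite list of incongruences: no subscript appearing in $Z_{1}$ is congruent mod $n$ to $\pm q$ times a subscript appearing in $Z_{1}$.

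Next I would use Lemma~\ref{le:3.2}(2) together with its stated consequence $-qC_{tq-j}=C_{jq+t}$ to write $-qZ_{1}$ explicitly as the union of the cosets $C_{i_{1}q-s},\,C_{i_{2}q-s},\,C_{i_{3}q-s},\,C_{j_{1}q+t},\,C_{j_{2}q+t}$ over the same index sets as in the statement; this is legitimate since those index sets lie in the region $0\le s\le m-2,\ 1\le t\le m-1$ with $i_{r},j_{r}$ in the prescribed intervals, where Lemma~\ref{le:3.2}(2) applies. Then I would bring all subscripts into canonical form. On the $Z_{1}$ side this is immediate: $sq+i_{r}\le(m-2)q+\frac{3q+4}{5}-m$ and $tq-j_{r}\le(m-1)q$ are both strictly below $\frac{n}{2}=\frac{q^{2}+1}{10}$ (using $m\le\frac{q-7}{10}$), so these are already least representatives, and the five index blocks are separated by the chosen offsets $\pm m$ and by the gaps between $\frac{q+3}{5},\frac{q-2}{5},\frac{2q+1}{5},\dots$, so the cosets of $Z_{1}$ are pairwise distinct. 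On the $-qZ_{1}$ side one uses $q^{2}\equiv-1\pmod n$ to pull $i_{r}q-s$ and $j_{r}q+t$ down by a controlled multiple of $n$ — for instance the top block $i_{3}q-s$ lies in $(2n,3n)$ — and then compares the result against $\frac{n}{2}$ to read off the least representative. Once every subscript is canonical, the disjointness reduces, exactly as in the proof of Lemma~\ref{le:3.3}, to an explicit chain of strict inequalities between the two sides.

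The part that genuinely needs care — and where I expect all the real work to be — is precisely this last bookkeeping: performing the modular reductions for the large families $C_{i_{3}q-s}$ and $C_{j_{1}q+t}$ with no off-by-one slip, and verifying that the prescribed lower limits ($m$ for $i_{1}$, $m-1$ for $j_{2}$, and the $+m$ shifts inside the middle and upper $i$-blocks) are exactly the values that keep neighbouring blocks — and the blocks of $Z_{1}$ from those of $-qZ_{1}$ — from overlapping. The tightest comparison is between $C_{(m-1)q-j_{1}^{\min}}\subseteq Z_{1}$ and $C_{j_{1}^{\min}q+1}\subseteq -qZ_{1}$, and the chosen interval for $j_{1}$ is what separates them; once all such comparisons are checked, the incongruence system closes and $Z_{1}\cap(-qZ_{1})=\emptyset$ follows.
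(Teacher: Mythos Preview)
Your plan is exactly what the paper does: it omits the proof of Lemma~\ref{le:3.6} and points back to Lemma~\ref{le:3.3}, and your outline is precisely that argument with the fractions shifted to the $q\equiv 7\pmod{10}$ case---compute $-qZ_{1}$ via Lemma~\ref{le:3.2} and compare subscripts block against block. The one place you go a little further than the paper is in explicitly reducing the $-qZ_{1}$ labels modulo $n$ before comparing; the paper's proof of Lemma~\ref{le:3.3} is content with the raw integer inequalities $sq+i_{r}<i_{r}'q-s'$, $tq-j_{r}<j_{r}'q+t'$ and the remark that the $Z_{1}$ labels lie below $\frac{q^{2}-9}{10}$, so your bookkeeping step is a welcome tightening rather than a different route.
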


\begin{theorem}\label{th:3.7}
	Let $n=\frac{q^2+1}{5}$ and  $ q=10k+7 $ $(k\geq 2) $  be an odd prime power. For a positive integer m with $2\leq m\leq \frac{q-7}{10}$, 
	let $\mathcal{C}$ be a cyclic code with defining set $Z$ given as follows 
	$$Z=C_0\bigcup C_1\bigcup\dots\bigcup C_{(m-1)q}.$$
	Then $|Z_{2}|=20(m-1)^2+1$ .
\end{theorem}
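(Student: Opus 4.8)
The plan is to imitate the proof of Theorem~\ref{th:3.4} almost verbatim, only replacing the partition points of the $q=10k+3$ situation by those appropriate to Lemma~\ref{le:3.6}. Let $Z_1$ be the set displayed in Lemma~\ref{le:3.6}, and put $Z_1'=Z\backslash Z_1$; explicitly,
$$Z_1'=\bigcup_{\substack{0 \leq i_1 \leq m-1,\\0\leq s\leq m-2}}C_{sq+i_1}
\bigcup_{\substack{\frac{q+3}{5}-m+1\leq i_2 \leq \frac{q-2}{5}+m-1,\\0\leq s\leq m-2}}C_{sq+i_2}
\bigcup_{\substack{\frac{2q+1}{5}-m+1\leq i_3 \leq \frac{2q-4}{5}+m-1,\\0\leq s\leq m-2}}C_{sq+i_3}$$
$$\bigcup_{\substack{0\leq j_1 \leq m-2,\\1\leq t\leq m-1}}C_{tq-j_1}
\bigcup_{\substack{\frac{q+3}{5}-m+1\leq j_2 \leq \frac{q-2}{5}+m-1,\\1\leq t\leq m-1}}C_{tq-j_2}
\bigcup_{\substack{\frac{2q+1}{5}-m+1\leq j_3 \leq \frac{2q-4}{5}+m-1,\\1\leq t\leq m-1}}C_{tq-j_3}.$$
Writing each coset $C_\ell$ with $0\le\ell\le(m-1)q$ either as $C_{sq+i}$ with $i$ small or as $C_{tq-j}$ with $j$ small, one checks at the endpoints that the index ranges of $Z_1$ and those of $Z_1'$ together account for all of $C_0,C_1,\dots,C_{(m-1)q}$ and are disjoint between $Z_1$ and $Z_1'$; hence $Z=C_0\cup C_1\cup\dots\cup C_{(m-1)q}=Z_1\cup Z_1'$ with $Z_1\cap Z_1'=\emptyset$.

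Next I would apply Lemma~\ref{le:3.2}(2), together with its stated consequence $-qC_{tq-j}=C_{jq+t}$, to each of the six blocks of $Z_1'$ and reduce the resulting subscripts modulo $n$. As in the proof of Theorem~\ref{th:3.4}, the block $\{C_{sq+i_1}\}$ and the block $\{C_{tq-j_1}\}$ get interchanged (the stray $i_1=0$ terms $C_{-s}=C_s$ already sitting inside $\{C_{sq+i_1}\}$), while the two middle $i$-blocks and the two middle $j$-blocks are permuted among themselves, so that $-qZ_1'=Z_1'$. Since $(-q)^2=q^2$ fixes every $q^2$-cyclotomic coset setwise, the map $W\mapsto-qW$ is an involution on unions of such cosets; hence $-qZ_1'=Z_1'$ and $Z_1\cap Z_1'=\emptyset$ force $Z_1\cap(-qZ_1')=\emptyset$ and $Z_1'\cap(-qZ_1)=\emptyset$, and together with $Z_1\cap(-qZ_1)=\emptyset$ from Lemma~\ref{le:3.6} the distributive expansion
$$Z_2=Z\cap(-qZ)=(Z_1\cup Z_1')\cap(-qZ_1\cup-qZ_1')=Z_1'$$
goes through exactly as in Theorem~\ref{th:3.4}.

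It then remains to count $|Z_1'|$. The first block contributes $m(m-1)$ distinct cosets, one of which, $C_0$, is a singleton; each of the four blocks whose index range has length $2(m-1)$ contributes $2(m-1)^2$ cosets; and $\{C_{tq-j_1}\}$ contributes $(m-1)^2$ cosets, of which exactly $m-2$, namely $C_q,C_{2q},\dots,C_{(m-2)q}$, already appear in the first block — this being the only overlap among the six blocks. Adding up, the number of distinct cosets is $m(m-1)-(m-2)+9(m-1)^2=10(m-1)^2+1$, and since each of them other than $C_0$ has subscript strictly between $0$ and $n/2$ and so has size $2$, we obtain $|Z_2|=|Z_1'|=20(m-1)^2+1$. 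I expect the main obstacle to be precisely this bookkeeping: verifying that the six blocks are pairwise disjoint apart from the single controlled overlap $\{C_q,\dots,C_{(m-2)q}\}$, that every subscript that occurs stays below $n/2$ (so the non-zero cosets really have two elements), and that $-qZ_1'$ reassembles into $Z_1'$. These are the same endpoint estimates as in the proof of Lemma~\ref{le:3.6}, now carried out with the boundary values $\tfrac{q+3}{5},\tfrac{q-2}{5},\tfrac{2q+1}{5},\tfrac{2q-4}{5},\tfrac{3q+4}{5}$ that belong to the case $q=10k+7$.
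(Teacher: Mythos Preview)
Your proposal is correct and follows essentially the same route the paper intends: the authors explicitly say the proof of Theorem~\ref{th:3.7} is ``similar to that in the Case~I'' and omit it, and what you wrote is precisely the analogue of the proof of Theorem~\ref{th:3.4} with the partition points $\tfrac{q+3}{5},\tfrac{q-2}{5},\tfrac{2q+1}{5},\tfrac{2q-4}{5},\tfrac{3q+4}{5}$ from Lemma~\ref{le:3.6} substituted in. If anything, you are more explicit than the paper, which simply asserts $|Z_1'|=20(m-1)^2+1$ without displaying the block-by-block count or flagging the overlap $\{C_q,\dots,C_{(m-2)q}\}$ between the first $i$-block and the first $j$-block.
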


\begin{theorem}\label{th:3.8}
	Let $n=\frac{q^2+1}{5}$ and  $ q=10k+7 $ $(k\geq 2) $  be an odd prime power. There are EAQMDS codes with parameters 
	$$[[n,n-4(m-1)(5m-q-5)-1, 2(m-1)q+2;20(m-1)^2+1]]_q,$$ where $2\leq m\leq \frac{q-7}{10}$.\\
	
\end{theorem}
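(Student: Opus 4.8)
The plan is to run the argument of Theorem~\ref{th:3.5} almost word for word, the only changes being the interval endpoints that occur in the bookkeeping for ``$q=10k+7$''; concretely, every place where the proof of Theorem~\ref{th:3.5} appealed to Lemma~\ref{le:3.3} and Theorem~\ref{th:3.4} will now appeal to Lemma~\ref{le:3.6} and Theorem~\ref{th:3.7}.

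So, for $m$ with $2\le m\le\frac{q-7}{10}$, let $\mathcal{C}$ be the cyclic code of length $n=\frac{q^2+1}{5}$ over $\mathbb{F}_{q^2}$ with defining set $Z=C_0\bigcup C_1\bigcup\cdots\bigcup C_{(m-1)q}$. Since $q^2\equiv-1\pmod n$, the $q^2$-cyclotomic cosets modulo $n$ are the ones recalled at the start of Section~3, namely $C_0=\{0\}$ and $C_i=\{i,\,n-i\}$ for $1\le i<\frac n2$; hence $Z$ is precisely the set of residues $\{0,\pm1,\ldots,\pm(m-1)q\}$ modulo $n$. Because $(m-1)q<\frac n2$ throughout the stated range of $m$, these residues are distinct and form a run of $2(m-1)q+1$ consecutive elements, so $|Z|=2(m-1)q+1$ and the BCH bound (Proposition~\ref{pro:2.1}) gives $d(\mathcal{C})\ge 2(m-1)q+2$. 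As $\dim\mathcal{C}=n-|Z|=n-2(m-1)q-1$, the Singleton bound forces equality, and $\mathcal{C}$ is an MDS $[n,\,n-2(m-1)q-1,\,2(m-1)q+2]_{q^2}$ code.

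Next I would pass to the entanglement-assisted setting. By Theorem~\ref{th:3.7} the decomposition $Z=Z_1\bigcup Z_2$ of Definition~\ref{def:2.5} has $|Z_2|=20(m-1)^2+1$, so Proposition~\ref{pro:2.4} combined with Lemma~\ref{le:2.6} (which identifies $c=\mathrm{rank}(HH^{\dag})$ with $|Z_2|$) produces an EAQECC with parameters $[[n,\,n-4(m-1)(5m-q-5)-1,\,2(m-1)q+2;\,20(m-1)^2+1]]_q$. Finally, denoting its three data by $k$, $c$, $d$, a one-line computation gives $n-k+c+2=4(m-1)q+4=2d$, so the entanglement-assisted Singleton bound of Proposition~\ref{pro:2.3} is attained and the code is EAQMDS (when $m$ is near $\frac{q-7}{10}$ one has $d>\frac{n+2}{2}$, so the large-distance form of the bound is what is used here, exactly as in Case~I).

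Given Lemma~\ref{le:3.6} and Theorem~\ref{th:3.7}, the argument above is pure assembly and presents no difficulty; the genuine content sits inside Theorem~\ref{th:3.7}, whose proof the paper defers as ``similar to Case~I''. If I had to supply it, I would write $Z=Z_1\bigcup Z_1'$ with $Z_1$ the set of Lemma~\ref{le:3.6} and an explicit $Z_1'$ assembled from the cosets $C_{sq+i}$ and $C_{tq-j}$ whose subscripts fill the complementary index intervals, then use Lemma~\ref{le:3.2}(2) to check $-qZ_1'=Z_1'$ (and cite Lemma~\ref{le:3.6} for $Z_1\cap(-qZ_1)=\emptyset$). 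Since $Z_1$ and $Z_1'$ are disjoint by construction, these two facts give $Z_1\cap(-qZ_1')=Z_1\cap Z_1'=\emptyset$ and $Z_1'\cap(-qZ_1)=-q(Z_1'\cap Z_1)=\emptyset$, whence $Z_2=Z\cap(-qZ)=Z_1'$, and a count of the cosets making up $Z_1'$ yields $|Z_1'|=20(m-1)^2+1$. I expect the only real obstacle to be the accurate handling of the endpoints $\frac{q+3}{5},\frac{q-2}{5},\frac{2q+1}{5},\frac{2q-4}{5},\frac{3q+4}{5}$ that replace the Case~I values $\frac{q+2}{5},\frac{q-3}{5},\frac{2q+4}{5},\frac{2q-1}{5},\frac{3q+1}{5}$, together with verifying that none of the index intervals becomes empty for $m$ in the allowed range; the rest is a routine though lengthy verification.
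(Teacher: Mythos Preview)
Your proposal is correct and matches the paper's approach exactly: the paper omits the proof of Theorem~\ref{th:3.8} entirely, declaring it ``similar to Case~I,'' and your plan to rerun the argument of Theorem~\ref{th:3.5} with Lemma~\ref{le:3.6} and Theorem~\ref{th:3.7} substituted for Lemma~\ref{le:3.3} and Theorem~\ref{th:3.4} is precisely what is intended. Your supplementary sketch of how Theorem~\ref{th:3.7} would be proved (via the splitting $Z=Z_1\cup Z_1'$ with $-qZ_1'=Z_1'$ and a coset count) also faithfully mirrors the paper's proof of Theorem~\ref{th:3.4}.
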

\noindent\textbf{Example 3.5}
~Let $q=37$ and $2\leq m\leq \frac{q-7}{10}=3.$
Then, $n=\frac{q^2+1}{5}=274,~m=2,~3$ respectively.
\\1) Let $m=2,$ according to Lemma \ref{le:3.6}, we can obtain
\begin{equation*}
\begin{split}
Z_{1}&=\bigcup_{\substack{2\leq i_1 \leq 6,\\s=0}}C_{sq+i_1}
\bigcup_{\substack{9\leq i_2 \leq 13,\\s=0}}C_{sq+i_2}
\bigcup_{\substack{16\leq i_3 \leq 21,\\s=0}}C_{sq+i_3}
\bigcup_{\substack{9\leq j_1 \leq 13,\\t=1}}C_{tq-j_1}
\bigcup_{\substack{1\leq j_2 \leq 6,\\t=1}}C_{tq-j_2}\\
&=C_2\bigcup C_3\bigcup\dots\bigcup C_6\bigcup C_{9}\bigcup\dots\bigcup C_{13}\bigcup C_{16}\bigcup\dots\bigcup C_{21}\bigcup C_{24}\bigcup\dots\bigcup\\
&~~~~C_{28}\bigcup C_{31}\bigcup\dots\bigcup C_{36}.
\end{split}
\end{equation*}

It is easy to check that $Z_1\bigcap-qZ_1=\emptyset$.
Then, according to Theorem \ref{th:3.7}, we can obtain an EAQMDS code $[[274,401,76;21]]_{37}.$\\
2) Let $m=3,$ according to Lemma \ref{le:3.6}, we can obtain
\\$$Z_{1}=
\bigcup_{\substack{3\leq i_1 \leq 5,\\0\leq s\leq 1}}C_{sq+i_1}
\bigcup_{\substack{10\leq i_2 \leq 12,\\0\leq s\leq 1}}C_{sq+i_2}
\bigcup_{\substack{17\leq i_3 \leq 20,\\0\leq s\leq 1}}C_{sq+i_3}
\bigcup_{\substack{2\leq j_1 \leq 5,\\1\leq t\leq 2}}C_{tq-j_1}
\bigcup_{\substack{10\leq j_2 \leq 13,\\1\leq t\leq 2}}C_{tq-j_2}.$$

It is easy to check that $Z_1\bigcap-qZ_1=\emptyset$. Then, according to Theorem \ref{th:3.7}, we can obtain an EAQMDS code $[[274,489,150;81]]_{37}.$\\\\
\noindent\textbf{Example 3.6}
~Let $q=47$ and $2\leq m\leq \frac{q-7}{10}=4.$
Then, $n=\frac{q^2+1}{5}=442,~m=2,~3,~4 $ respectively.
\\1) Let $m=2,$ according to Lemma \ref{le:3.6}, we can obtain

\begin{equation*}
	\begin{split}
		Z_{1}&=
		\bigcup_{\substack{2\leq i_1 \leq 8,\\s=0}}C_{sq+i_1}
		\bigcup_{\substack{11\leq i_2 \leq 17,\\s=0}}C_{sq+i_2}
		\bigcup_{\substack{20\leq i_3 \leq 27,\\s=0}}C_{sq+i_3}
		\bigcup_{\substack{11\leq j_1 \leq 17,\\t=1}}C_{tq-j_1}
		\bigcup_{\substack{1\leq j_2 \leq 8,\\t=1}}C_{tq-j_2}\\
		&=C_2\bigcup C_3\bigcup\dots\bigcup C_8\bigcup C_{11}\bigcup\dots\bigcup C_{17}\bigcup C_{20}\bigcup\dots\bigcup C_{27}\bigcup C_{30}\bigcup\dots\\
		&~~~\bigcup C_{36}\bigcup C_{39}\bigcup\dots\bigcup C_{46}.
     \end{split}
\end{equation*}

It is easy to check that $Z_1\bigcap-qZ_1=\emptyset$.
Then, according to Theorem \ref{th:3.7}, we can obtain an EAQMDS code $[[442,609,96;21]]_{47}.$\\
2) Let $m=3,$ according to Lemma \ref{le:3.6}, we can obtain
\\$$Z_{1}=
\bigcup_{\substack{3\leq i_1 \leq 7 ,\\0\leq s\leq 1}}C_{sq+i_1}
\bigcup_{\substack{12\leq i_2 \leq 16 ,\\0\leq s\leq 1}}C_{sq+i_2}
\bigcup_{\substack{21\leq i_3 \leq 26,\\0\leq s\leq 1}}C_{sq+i_3}
\bigcup_{\substack{2\leq j_1 \leq 7,\\1\leq t\leq 2}}C_{tq-j_1}
\bigcup_{\substack{12\leq j_2 \leq 16,\\1\leq t\leq 2}}C_{tq-j_2}.$$

It is easy to check that $Z_1\bigcap-qZ_1=\emptyset$.
Then, according to Theorem \ref{th:3.7}, we can obtain an EAQMDS code $[[442,737,190;81]]_{47}.$\\
3) Let $m=4,$ according to Lemma \ref{le:3.6}, we can obtain
\\$$Z_{1}=
\bigcup_{\substack{4\leq i_1 \leq 6,\\0\leq s\leq 2}}C_{sq+i_1}
\bigcup_{\substack{13\leq i_2 \leq 15,\\0\leq s\leq 2}}C_{sq+i_2}
\bigcup_{\substack{22\leq i_3 \leq 25,\\0\leq s\leq 2}}C_{sq+i_3}
\bigcup_{\substack{3\leq j_1 \leq 6,\\1\leq t\leq 3}}C_{tq-j_1}
\bigcup_{\substack{13\leq j_2 \leq 15,\\1\leq t\leq 3}}C_{tq-j_2}.$$

It is easy to check that $Z_1\bigcap-qZ_1=\emptyset$.
Then, according to Theorem \ref{th:3.7}, we can obtain an EAQMDS code $[[442,825,284;181]]_{47}.$\\

Finally, we also have similar results for  $e>1$ is an odd positive integer. These results are given in the following lemma and theorems. Because the proofs of them are similar to that in Lemma \ref{le:3.3} and Theorems \ref{th:3.4}, \ref{th:3.5}, so we omit it here.

\begin{lemma}\label{le:3.9}
	
	Let $n=\frac{q^2+1}{5}$ and  $ q=2^e  $  be an even prime power, where $e>1$ is an odd positive integer. \\
	1) When $e\equiv 1~ {\rm mod~ }4 $ is a positive integer. For a positive integer $2\leq m\leq \frac{q-2}{10}$, let 
	$$Z_{1}=\bigcup_{\substack{m\leq i_1 \leq \frac{q+3}{5}-m,\\0\leq s\leq m-2}}C_{sq+i_1}
	\bigcup_{\substack{\frac{q-2}{5}+m\leq i_2 \leq \frac{2q+1}{5}-m,\\0\leq s\leq m-2}}C_{sq+i_2}
	\bigcup_{\substack{\frac{2q-4}{5}+m\leq i_3 \leq \frac{3q+4}{5}-m,\\0\leq s\leq m-2}}C_{sq+i_3}$$
	
	$$\bigcup_{\substack{\frac{q-2}{5}+m\leq j_1 \leq \frac{2q+1}{5}-m,\\1\leq t\leq m-1}}C_{tq-j_1}
	\bigcup_{\substack{m-1\leq j_2 \leq \frac{q+3}{5}-m,\\1\leq t\leq m-1}}C_{tq-j_2}.
	$$
	Then $Z_1\bigcap-qZ_1=\emptyset$.\\
    2) When $e\equiv 3~{\rm mod~ }4$ is a positive integer. For a positive integer $2\leq m\leq \frac{q-8}{10}$, let 
    $$Z_{1}=
	\bigcup_{\substack{m\leq i_1 \leq \frac{q+2}{5}-m,\\0\leq s\leq m-2}}C_{sq+i_1}
	\bigcup_{\substack{\frac{q-3}{5}+m\leq i_2 \leq \frac{2q+4}{5}-m,\\0\leq s\leq m-2}}C_{sq+i_2}
	\bigcup_{\substack{\frac{2q-1}{5}+m\leq i_3 \leq \frac{3q+1}{5}-m,\\0\leq s\leq m-2}}C_{sq+i_3}$$
	
	$$\bigcup_{\substack{\frac{q-3}{5}+m\leq j_1 \leq \frac{2q+4}{5}-m,\\1\leq t\leq m-1}}C_{tq-j_1}
	\bigcup_{\substack{m-1\leq j_2 \leq \frac{q+2}{5}-m,\\1\leq t\leq m-1}}C_{tq-j_2}.
	$$
	Then $Z_1\bigcap-qZ_1=\emptyset$.
	
\end{lemma}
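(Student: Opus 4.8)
The plan is to mirror the proof of Lemma~\ref{le:3.3} (for part~2) and of the $q=10k+7$ analogue behind Lemma~\ref{le:3.6} (for part~1), the only substantive new point being the verification that the cyclotomic-coset action identity of Lemma~\ref{le:3.2} persists for even $q$ over the index ranges written here. First I would observe that $5n = q^2+1$ forces $q^2 \equiv -1 \pmod n$, hence $q^4 \equiv 1 \pmod n$ and $\gcd(n,q)=1$; consequently every $q^2$-cyclotomic coset modulo $n$ is $C_i = \{i,\, iq^2 \bmod n\} = \{i,\, n-i\}$, of size at most two, exactly as in the odd case. This step uses nothing about the parity of $q$.

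Next I would establish the even-$q$ counterpart of Lemma~\ref{le:3.2}: for the pairs $(s,i)$ occurring in $Z_1$ one has $-qC_{sq+i} = C_{iq-s}$, and symmetrically $-qC_{tq-j} = C_{jq+t}$. The underlying congruence is unchanged, namely $-q\cdot\bigl(-(sq+i)\bigr) = sq^2 + iq = s(q^2+1) + iq - s \equiv iq - s \pmod n$; what must be re-examined is only that $iq-s$ (resp.\ $jq+t$) is the \emph{reduced} representative, i.e.\ lies in $[0,\lfloor n/2\rfloor]$, so that the coset is correctly named. This is where the case split on $e \bmod 4$ enters: since $e$ is odd, $e \equiv 1 \pmod 4$ gives $q \equiv 2 \pmod{10}$, which reproduces the residue pattern of the $q = 10k+7$ situation (hence the ranges of part~1 coincide with those of Lemma~\ref{le:3.6}), while $e \equiv 3 \pmod 4$ gives $q \equiv 8 \pmod{10}$, which reproduces the residue pattern of the $q = 10k+3$ situation (hence the ranges of part~2 coincide with those of Lemma~\ref{le:3.3}). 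In each case one checks that the relevant fractions $\frac{q+3}{5},\,\frac{q-2}{5},\,\frac{2q+1}{5},\,\frac{2q-4}{5},\,\frac{3q+4}{5}$ in part~1 (resp.\ $\frac{q+2}{5},\,\frac{q-3}{5},\,\frac{2q+4}{5},\,\frac{2q-1}{5},\,\frac{3q+1}{5}$ in part~2) are integers and that the upper bound $m \le \frac{q-2}{10}$ (resp.\ $m \le \frac{q-8}{10}$) keeps every index interval nonempty and contained in $[0,\lfloor n/2\rfloor]$.

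With the coset action in hand, I would compute $-qZ_1$ term by term, obtaining the union of the $C_{i_r q - s}$ and the $C_{j_\ell q + t}$ over the shifted ranges exactly as in the proof of Lemma~\ref{le:3.3}, and then verify the block of strict inequalities $sq+i_r < i_r q - s$, $sq+i_r < i_{r'}q - s$, $sq+i_r < j_\ell q + t$, $tq - j_\ell < i_r q - s$, $tq-j_\ell < j_{\ell'}q+t$, and so on. Since every subscript occurring in $Z_1$ is bounded above by $\tfrac{q^2-9}{10} < \tfrac n2$ while every subscript occurring in $-qZ_1$ strictly exceeds that bound, no coset of $Z_1$ can coincide with a coset of $-qZ_1$, giving $Z_1 \cap (-qZ_1) = \emptyset$. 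The computation is identical in shape to the odd case; only the arithmetic constants change.

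The main obstacle is the reduced-representative bookkeeping for even $q$: one must be sure that no quantity among the $sq+i_r$, $i_r q - s$, $tq - j_\ell$, $jq+t$ wraps around modulo $n$, and that the boundary cases at the ends of each index interval still satisfy the strict inequalities. Because $q$ is even, the parities of $\frac{q\pm c}{5}$ differ from those in Lemma~\ref{le:3.3}, so these integrality and endpoint checks must be redone from scratch rather than quoted; but once the residue of $q$ modulo $10$ is fixed this is entirely mechanical, and no idea beyond the proof of Lemma~\ref{le:3.3} is required.
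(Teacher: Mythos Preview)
Your proposal is correct and matches the paper's approach exactly: the paper itself omits the proof of Lemma~\ref{le:3.9}, saying only that it is ``similar to that in Lemma~\ref{le:3.3}'', and you supply precisely this---mirroring Lemma~\ref{le:3.3} for part~2) and Lemma~\ref{le:3.6} for part~1), together with the observation (left implicit in the paper) that $e\equiv 1\pmod 4$ forces $q\equiv 2\pmod{10}$ and $e\equiv 3\pmod 4$ forces $q\equiv 8\pmod{10}$, so that the divisibility and range conditions reduce to those already handled in the odd-$q$ cases. Your flagging of the reduced-representative bookkeeping as the only point requiring fresh verification is apt and goes slightly beyond what the paper spells out.
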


\begin{theorem}\label{th:3.10}
	Let $n=\frac{q^2+1}{5}$ and  $ q=2^e  $  be an even prime power, where $e>1$ is an odd positive integer. \\
	1) When $e\equiv 1~ {\rm mod~ }4 $ is a positive integer. For a positive integer $2\leq m\leq \frac{q-2}{10}$,
	let $\mathcal{C}$ be a cyclic code with defining set $Z$ given as follows 
	$$Z=C_0\bigcup C_1\bigcup\dots\bigcup C_{(m-1)q}.$$
	Then $|Z_{2}|=20(m-1)^2+1$ .\\
	2) When $e\equiv 3~ {\rm mod~ }4 $ is a positive integer. For a positive integer $2\leq m\leq \frac{q-8}{10}$,
	let $\mathcal{C}$ be a cyclic code with defining set $Z$ given as follows 
	$$Z=C_0\bigcup C_1\bigcup\dots\bigcup C_{(m-1)q}.$$
	Then $|Z_{2}|=20(m-1)^2+1$ .
\end{theorem}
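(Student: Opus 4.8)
The plan is to mirror, case by case, the proof of Theorem~\ref{th:3.4}. Write $q=2^e$ with $e>1$ odd. If $e\equiv 1\pmod{4}$ then $q\equiv 2\pmod{10}$, so $\frac{q+3}{5},\frac{q-2}{5},\frac{2q+1}{5},\frac{2q-4}{5},\frac{3q+4}{5}$ are integers and the index ranges in part 1) of Lemma~\ref{le:3.9} have exactly the shape of those in Lemma~\ref{le:3.6}; if $e\equiv 3\pmod{4}$ then $q\equiv 8\pmod{10}$, the fractions $\frac{q+2}{5},\frac{q-3}{5},\frac{2q+4}{5},\frac{2q-1}{5},\frac{3q+1}{5}$ are integers, and part 2) of Lemma~\ref{le:3.9} matches the shape of Lemma~\ref{le:3.3}. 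In both cases $q^2\equiv -1\pmod{n}$, so $C_i=\{i,-i\}$ for all $i$, $C_0$ is the only singleton coset, and the computation of Lemma~\ref{le:3.2} goes through verbatim: $-qC_{sq+i}=C_{iq-s}$ and, reversed, $-qC_{tq-j}=C_{jq+t}$. Moreover $(-q)^2\equiv -1\pmod{n}$, so $-q$ acts as an involution on the set of $q^2$-cyclotomic cosets. I would record all of this first.

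Next, for each case fix $\mathcal C$ with defining set $Z=C_0\cup C_1\cup\dots\cup C_{(m-1)q}$ and let $Z_1$ be precisely the set appearing in the corresponding part of Lemma~\ref{le:3.9}, so $Z_1\cap(-qZ_1)=\emptyset$ is already in hand. I would then introduce the complementary set $Z_1'$ — the cosets among $C_0,\dots,C_{(m-1)q}$ that are not in $Z_1$ — in the same six-block form as in the proof of Theorem~\ref{th:3.4}: three blocks of cosets $C_{sq+i_r}$ with $0\le s\le m-2$, the inner ranges being $0\le i_1\le m-1$ together with the two intervals that lie strictly between consecutive $Z_1$-ranges, and three blocks $C_{tq-j_r}$ with $1\le t\le m-1$ over the analogous ranges. (For $e\equiv 3\pmod{4}$ these are literally the ranges used in the proof of Theorem~\ref{th:3.4}; for $e\equiv 1\pmod{4}$ they are the corresponding ranges built from the endpoints of Lemma~\ref{le:3.6}.) Comparing the endpoints of these integer intervals then yields two facts: $Z=Z_1\cup Z_1'$ with $Z_1\cap Z_1'=\emptyset$, and $Z_1'$ is closed under the involution $C_{sq+i}\mapsto C_{iq-s}$, hence $-qZ_1'=Z_1'$.

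With these in place the conclusion is formal: expanding
$$Z_2=Z\cap(-qZ)=(Z_1\cup Z_1')\cap(-qZ_1\cup Z_1'),$$
the term $Z_1\cap(-qZ_1)$ is empty by Lemma~\ref{le:3.9}, the terms $Z_1\cap Z_1'$ and $Z_1'\cap(-qZ_1)=-q(Z_1'\cap Z_1)$ are empty since $Z_1\cap Z_1'=\emptyset$ and $-q$ is a bijection on cosets, and $Z_1'\cap Z_1'=Z_1'$. Hence $Z_2=Z_1'$, so $|Z_2|=|Z_1'|$.

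It remains to count $|Z_1'|$. Each of the six blocks contributes (number of admissible inner indices)$\times$(number of admissible $s$ or $t$) cosets, each of size $2$, with two corrections: $C_0$ occurs once and has size $1$, and the $m-2$ cosets $C_q,C_{2q},\dots,C_{(m-2)q}$ are counted both in the $i_1=0$ slice of the first block and in the $j_1=0$ slice of the fourth block. Summing $[2m(m-1)-1]+4(m-1)^2+4(m-1)^2+2(m-1)^2+4(m-1)^2+4(m-1)^2$ and subtracting $2(m-2)$ for the doubly counted cosets gives $20(m-1)^2+1$. The parts that genuinely need care are establishing $-qZ_1'=Z_1'$ and, above all, this final bookkeeping — in particular verifying that in the stated range $2\le m\le\frac{q-2}{10}$ (resp. $\frac{q-8}{10}$) the listed cosets are pairwise distinct apart from that single overlap, so that the interval arithmetic of $q\bmod 10$ suffices; everything else is a transcription of the $q=10k+3$ and $q=10k+7$ arguments.
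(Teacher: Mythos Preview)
Your proposal is correct and follows exactly the route the paper indicates: the paper omits the proof of Theorem~\ref{th:3.10} entirely, referring back to Theorem~\ref{th:3.4}, and you have carried out precisely that transcription --- identifying $q\equiv 2$ or $8\pmod{10}$, taking $Z_1$ from Lemma~\ref{le:3.9}, defining the complementary $Z_1'$, checking $-qZ_1'=Z_1'$, and concluding $Z_2=Z_1'$ via the four-term expansion. Your explicit overlap bookkeeping in the final count (the $m-2$ cosets $C_q,\dots,C_{(m-2)q}$ shared between the $i_1=0$ and $j_1=0$ slices, and the singleton $C_0$) is in fact more careful than what the paper writes out for Theorem~\ref{th:3.4}, where the formula $|Z_1'|=20(m-1)^2+1$ is simply asserted.
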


\begin{theorem}\label{th:3.11}
	
	Let $n=\frac{q^2+1}{5}$ and  $ q=2^e  $  be an even prime power.\\
	1) When $e\equiv 1~{\rm mod~ 4} $ is a positive integer, for a positive integer $2\leq m\leq \frac{q-2}{10}$.
	There are EAQMDS codes with parameters 
	$$[[n,n-4(m-1)(5m-q-5)-1, 2(m-1)q+2;20(m-1)^2+1]]_q.$$
	2) When $e\equiv 3~{\rm mod~ }4 $ is a positive integer, for a positive integer $2\leq m\leq \frac{q-8}{10}$.
	There are EAQMDS codes with parameters 
	$$[[n,n-4(m-1)(5m-q-5)-1, 2(m-1)q+2;20(m-1)^2+1]]_q.$$
\end{theorem}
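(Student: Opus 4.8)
The plan is to carry out, separately for $e\equiv 1$ and $e\equiv 3\pmod 4$, the same argument as in the proofs of Theorems~\ref{th:3.5} and \ref{th:3.8}, now drawing on the even-characteristic preparation already isolated in Lemma~\ref{le:3.9} and Theorem~\ref{th:3.10}. The first point I would record is why the two sub-cases reduce to the two odd cases already handled: for $q=2^e$ one has $q\equiv 2\pmod 5$ when $e\equiv 1\pmod 4$ and $q\equiv 3\pmod 5$ when $e\equiv 3\pmod 4$ (these being exactly the residue classes for which $n=\frac{q^2+1}{5}$ is an integer), so that in the first case $q\equiv 2\pmod{10}$ and the defining set follows the same index pattern (the one built around $\frac{q+3}{5}$) as the case $q=10k+7$ of Lemma~\ref{le:3.6}, while in the second $q\equiv 8\pmod{10}$ and one is in the pattern built around $\frac{q+2}{5}$, as in the case $q=10k+3$ of Lemma~\ref{le:3.3}. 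In both situations the $q^2$-cyclotomic cosets $C_0=\{0\}$ and $C_i=\{i,n-i\}$ recorded at the start of this section, together with the coset identities $-qC_{sq+i}=C_{iq-s}$ and $-qC_{tq-j}=C_{jq+t}$ of Lemma~\ref{le:3.2}, apply verbatim, so part~1 of the theorem mirrors Theorem~\ref{th:3.8} and part~2 mirrors Theorem~\ref{th:3.5}.

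With this reduction in hand, for each admissible $m$ I would take $\mathcal{C}$ to be the cyclic code over $\mathbb{F}_{q^2}$ of length $n$ with defining set $Z=C_0\cup C_1\cup\dots\cup C_{(m-1)q}$. Since the subscripts $0,1,\dots,(m-1)q$ together with their negatives $n-(m-1)q,\dots,n-1$ form a run of $2(m-1)q+1$ consecutive residues modulo $n$ (the bounds $m\le\frac{q-2}{10}$ and $m\le\frac{q-8}{10}$ respectively guarantee $2(m-1)q<n$, so that these residues are pairwise distinct), Proposition~\ref{pro:2.1} gives $d(\mathcal{C})\ge 2(m-1)q+2$; since $\dim\mathcal{C}=n-|Z|=n-2(m-1)q-1$, the Singleton bound forces equality and $\mathcal{C}$ is an MDS $[n,\,n-2(m-1)q-1,\,2(m-1)q+2]_{q^2}$ code. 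Next I would invoke Theorem~\ref{th:3.10} (part~1 or part~2 according to $e\bmod 4$) to obtain $|Z_2|=20(m-1)^2+1$ for the decomposition $Z=Z_1\cup Z_2$, and then Proposition~\ref{pro:2.4} together with Lemma~\ref{le:2.6} to produce an EAQEC code of length $n$, minimum distance $2(m-1)q+2$ and $c=|Z_2|=20(m-1)^2+1$ entangled states, with the stated dimension. Finally I would check the identity $n-k+c+2=4(m-1)q+4=2d$ and invoke Proposition~\ref{pro:2.3} to conclude that these codes are EAQMDS, word for word as in the proof of Theorem~\ref{th:3.5}.

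The only part carrying genuine content, and hence the main obstacle, is Theorem~\ref{th:3.10}, that is, the evaluation $|Z_2|=20(m-1)^2+1$; this is the even-characteristic transcription of the proof of Theorem~\ref{th:3.4}. There one writes $Z=Z_1\cup Z_1'$, with $Z_1$ the set of Lemma~\ref{le:3.9} (for which $Z_1\cap(-qZ_1)=\emptyset$ is precisely the content of that lemma) and $Z_1'$ the union of the cosets indexed by the complementary ranges of $i_1,i_2,i_3,j_1,j_2,j_3$ and $s,t$; one then checks via Lemma~\ref{le:3.2} that $-qZ_1'=Z_1'$, deduces $Z_2=Z\cap(-qZ)=Z_1'$, and counts $|Z_1'|=20(m-1)^2+1$. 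The delicate work is purely bookkeeping: one must confirm that these index ranges yield pairwise disjoint cyclotomic cosets that exactly partition $Z$, and that the comparisons among the representatives $sq+i_r$, $tq-j_\ell$, $i_rq-s$ and $j_\ell q+t$ go through as in the proofs of Lemmas~\ref{le:3.3} and \ref{le:3.6} and of Theorem~\ref{th:3.4}. The only place where $e\bmod 4$ intervenes is the choice between the two index patterns of Lemma~\ref{le:3.9} (the one built around $\frac{q+3}{5}$ versus the one built around $\frac{q+2}{5}$); after that choice the verification is identical to the odd-case one, which is why the full details may safely be omitted, as the authors do.
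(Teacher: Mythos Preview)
Your proposal is correct and follows exactly the approach the paper intends: the paper omits the proof entirely, stating only that it is similar to Lemma~\ref{le:3.3} and Theorems~\ref{th:3.4},~\ref{th:3.5}, and you have spelled out precisely that reduction---taking the defining set $Z=C_0\cup\cdots\cup C_{(m-1)q}$, applying the BCH bound, invoking Theorem~\ref{th:3.10} for $|Z_2|$, and checking the Singleton equality via Propositions~\ref{pro:2.3} and~\ref{pro:2.4}. Your observation that the residue of $2^e$ modulo $5$ (namely $2$ when $e\equiv 1$ and $3$ when $e\equiv 3\pmod 4$) is what pairs part~1 with the $q=10k+7$ pattern and part~2 with the $q=10k+3$ pattern is a helpful clarification the paper leaves implicit; one small caveat is that the cyclotomic-coset description at the start of Section~3 and Lemma~\ref{le:3.2} are formally stated only for odd $q$, so strictly speaking you are also using (and should note) that their proofs go through unchanged for even $q$ since they rely only on $q^2\equiv -1\pmod n$.
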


\noindent\textbf{Example 3.7}
~Let $e=5,~q=2^5=32$ and $2\leq m\leq \frac{q-2}{10}=3.$
Then, $n=\frac{q^2+1}{5}=205,~m=2,~3$ respectively.
\\1) Let $m=2,$ according to Lemma \ref{le:3.9}, we can obtain
\begin{equation*}
\begin{split}
Z_{1}&=\bigcup_{\substack{2\leq i_1 \leq 5,\\s=0}}C_{sq+i_1}\bigcup_{\substack{8\leq i_2 \leq 11,\\s=0}}C_{sq+i_2}\bigcup_{\substack{14\leq i_3 \leq 18,\\s=0}}C_{sq+i_3}\bigcup_{\substack{8\leq j_1 \leq 11,\\t=1}}C_{tq-j_1}\bigcup_{\substack{1\leq j_2 \leq 5,\\t=1}}C_{tq-j_2}\\
&=C_2\bigcup C_3\bigcup\dots\bigcup C_5\bigcup C_{8}\bigcup\dots\bigcup C_{11}\bigcup C_{14}\bigcup\dots\bigcup C_{18}\bigcup C_{21}\bigcup\dots\\
&~~~\bigcup C_{24}\bigcup C_{27}\bigcup\dots\bigcup C_{31}.
\end{split}
\end{equation*}

It is easy to check that $Z_1\bigcap-qZ_1=\emptyset$.
Then, according to Theorem \ref{th:3.11}, we can obtain an EAQMDS code $[[205,312,66;21]]_{32}.$\\
2) Let $m=3,$ according to Lemma \ref{le:3.9}, we can obtain
\\$$Z_{1}=
\bigcup_{\substack{3\leq i_1 \leq 4,\\0\leq s\leq 1}}C_{sq+i_1}
\bigcup_{\substack{9\leq i_2 \leq 10,\\0\leq s\leq 1}}C_{sq+i_2}
\bigcup_{\substack{15\leq i_3 \leq 17,\\0\leq s\leq 1}}C_{sq+i_3}
\bigcup_{\substack{2\leq j_1 \leq 4  ,\\1\leq t\leq 2}}C_{tq-j_1}
\bigcup_{\substack{7\leq j_2 \leq 10,\\1\leq t\leq 2}}C_{tq-j_2}.$$

It is easy to check that $Z_1\bigcap-qZ_1=\emptyset$. Then, according to Theorem \ref{th:3.11}, we can obtain an EAQMDS code $[[205,380,130;81]]_{32}.$\\

\noindent\textbf{Example 3.8}
~Let $e=5,~q=2^7=128,$ and $2\leq m\leq \frac{q-8}{10}=12.$
Then, $n=\frac{q^2+1}{5}=3277,~m=2,3,\dots,12$ respectively.
\\1) Let $m=2,$ according to Lemma \ref{le:3.9}, we can obtain
\begin{equation*}
\begin{split}
Z_{1}&=\bigcup_{\substack{2\leq i_1 \leq 24,\\s=0}}C_{sq+i_1}
\bigcup_{\substack{27\leq i_2 \leq 50,\\s=0}}C_{sq+i_2}
\bigcup_{\substack{53\leq i_3 \leq 75,\\s=0}}C_{sq+i_3}
\bigcup_{\substack{27\leq j_1 \leq 50,\\t=1}}C_{tq-j_1}
\bigcup_{\substack{1\leq j_2 \leq 24,\\t=1}}C_{tq-j_2}\\
&=C_2\bigcup C_3\bigcup\dots\bigcup C_{24}\bigcup C_{27}\bigcup\dots\bigcup C_{50}\bigcup C_{53}\bigcup\dots\bigcup C_{75}\bigcup C_{78}\bigcup\\
&\dots\bigcup C_{101}\bigcup C_{104}\bigcup\dots\bigcup C_{127}.
\end{split}
\end{equation*}

It is easy to check that $Z_1\bigcap-qZ_1=\emptyset$.
Then, according to Theorem \ref{th:3.11}, we can obtain an EAQMDS code $[[3277,3768,258;21]]_{128}.$\\
2) Let $m=3,$ according to Lemma \ref{le:3.9}, we can obtain
\\$$Z_{1}=
\bigcup_{\substack{3\leq i_1 \leq 23,\\s=0}}C_{sq+i_1}
\bigcup_{\substack{28\leq i_2 \leq 49,\\s=0}}C_{sq+i_2}
\bigcup_{\substack{54\leq i_3 \leq 74,\\s=0}}C_{sq+i_3}
\bigcup_{\substack{28\leq j_1 \leq 51,\\t=1}}C_{tq-j_1}
\bigcup_{\substack{2\leq j_2 \leq 23,\\t=1}}C_{tq-j_2}$$

It is easy to check that $Z_1\bigcap-qZ_1=\emptyset$.
Then, according to Theorem \ref{th:3.11}, we can obtain an EAQMDS code $[[3277,4220,514;81]]_{128}.$\\
3) Let $m=4,5,\dots,12$ respectively, according to Theorem \ref{th:3.11}, we can obtain EAQMDS codes below:
$$[[3277,4632,770;181]]_{128},
	~~~~[[3277,5004,1026;321]]_{128},
	~~[[3277,5336,1282;501]]_{128},$$
$[[3277,5628,1538;721]]_{128},
~~[[3277,5880,1794;981]]_{128},
~~[[3277,6092,2050;1281]]_{128},$
$$[[3277,6264,2306;1621]]_{128},
[[3277,6396,2562;2001]]_{128},
[[3277,6488,2818;2421]]_{128}.$$
\section{Conclusion}
In this paper, series of EAQMDS codes with length $n=\frac{q^2+1}5$ are constructed from cyclic codes by decomposing the defining set of cyclic codes. These EAQMDS codes have much bigger minimum distance than the known quantum MDS codes and EAQECCs with the same length. It would be interesting to construct more entanglement-assisted quantum MDS codes from other classical codes with different lengths.

\end{document}